%
\documentclass[11pt]{article}
\usepackage{hyperref}
\usepackage{tikz}



\textwidth 15.3cm
\oddsidemargin 0in
\evensidemargin 0in
\textheight 22.3cm
\topmargin 0in
\headsep 0in

\usepackage{amsmath,amsthm,amsfonts,amssymb,amsopn,amscd} 
\usepackage{color}
%


\def\b{\beta}

\def\e{\varepsilon}

\def\l{\lambda}

\def\A{{\cal A}}

\def\D{{\cal D}}

\def\M{{\cal M}}

\def\R{{\cal R}}

\def\H{{\cal H}}
\def\K{{\cal K}}
\def\S{{\cal S}}

\def\f{{\varphi}}

\def\l{{\lambda}}
\def\x{{{h}}}

\def\PSL{{{\rm PSL}(2,\mathbb R)}}

\def\S2{S^{1(2)}}

%


\newtheorem{theorem}{Theorem}[section]
\newtheorem{lemma}[theorem]{Lemma}

\newtheorem{corollary}[theorem]{Corollary}

\newtheorem{proposition}[theorem]{Proposition}

\theoremstyle{definition} 

\theoremstyle{remark} 

\newcommand{\ben}{\begin{equation}}
\newcommand{\een}{\end{equation}}

\def\PSL{PSU(1,1)}

\def\SL2{{{\rm SL}(2,\R)}}

\def\PSL2{{{\rm PSL}(2,\Reali)}}

\def\U1{{{\rm V}(1)}}
\def\SU2{{{\rm SV}(2)}}

\def\SU{{{\rm SU}}}

\def\A{{\mathcal A}}

\def\D{{\mathcal D}}

\def\H{{\mathcal H}}

\def\K{{\mathcal K}}

\def\M{{\mathcal M}}

\def\T{{\mathcal T}}

\parskip1mm

\title{\Huge{The information in a wave
}}

\author{{\sc Fabio Ciolli$^*$, Roberto Longo$^*$, Giuseppe Ruzzi}\thanks{Supported by the ERC Advanced Grant 669240 QUEST ``Quantum Algebraic Structures and Models'', MIUR FARE R16X5RB55W  QUEST-NET and GNAMPA-INdAM. \eject
E-mail: {ciolli@mat.uniroma2.it, longo@mat.uniroma2.it, ruzzi@mat.uniroma2.it}
}
\\
Dipartimento di Matematica,
Universit\`a di Roma Tor Vergata,\\
Via della Ricerca Scientifica, 1, I-00133 Roma, Italy
}

\date{}
\begin{document}

\maketitle

\begin{abstract}
We provide the notion of entropy for a classical Klein-Gordon real wave, that we derive as particular case of a notion entropy for a vector in a Hilbert space with respect to a real linear subspace.
We then consider a localised automorphism on the Rindler spacetime, in the context of a free neutral Quantum Field Theory, that is associated with a second quantised wave, and we explicitly compute its entropy $S$, that turns out to be given by the entropy of the associated classical wave. Here $S$ is defined as the relative entropy between the Rindler vacuum state and the corresponding sector state (coherent state). By $\l$-translating the Rindler spacetime into itself along the upper null horizon, we study the behaviour of the corresponding entropy $S(\l)$. In particular, we show that the QNEC inequality  in the form $\frac{d^2}{d\l^2}S(\l)\geq 0$ holds true for coherent states, because $\frac{d^2}{d\l^2}S(\l)$ is the integral along the space horizon of a manifestly non-negative quantity, the component of the stress-energy tensor in the null upper horizon direction.  
\end{abstract}

\newpage

\section{Introduction} 
Recently, the interplay between Quantum Information and Quantum Field Theory has been subject of deep analyses and new aspects and structures are appearing both in Physics and in Mathematics. 

The physical grounds may ultimately rely on the well know probabilistic nature of Quantum Mechanics, in particular of the wave function. 
Yet, the combination of General Relativity with Quantum Field Theory in the context of Black Hole Thermodynamics showed a fundamental role of Entropy related to Geometry (see \cite{Wa}).
Recently, the study of Entropy in relation with
the Quantum Null Energy Condition has become of much interest (see \cite{CF,Wit} and refs therein). 

Concerning Mathematics, the typical finite dimensional framework of Quantum Information is not sufficient in the QFT context, although it provides essential concepts and results that, possibly, may have an extension to the needed wider infinite dimensional context. The natural language in this context is provided by the Theory of Operator Algebras \cite{T}, in particular by the Tomita-Takesaki modular theory and by Araki's definition of relative entropy for states of a von Neumann algebra \cite{Ar}. 

In the Haag-Kastler approach, QFT is described by the net $\A$ of local von Neumann algebras $\A(O)$ associated to spacetime regions $O$ \cite{H}. In this paper, we are considering the Rindler spacetime $W$ as embedded in the Minkowski spacetime $\mathbb R^{d+1}$, say $W$ is the wedge region $x^1 > |x^0|$ of $\mathbb R^{d+1}$. Motivated by Quantum Information and QNEC aspects, we are going to study the relative entropy $S(\psi|\!|\f)$ between states $\f$ and $\psi$ of the von Neumann algebra $\A(W)$. Here $\f$ is the natural thermal state with respects to the Rindler time-like geodesic flow on $W$, namely the restriction of the Minkowski vacuum state to $\A(W)$, which is KMS with respect to the $x^1$-boost flow at Hawking inverse temperature $2\pi$.

We shall consider the case where the net $\A$ is associated with the free, neutral QFT, and $\psi$ is a coherent state. The first analysis of this kind has been pursued in \cite{L18} in the low dimensional case, namely the chiral conformal QFT associated with the $U(1)$-current. Indeed, in \cite{L18} explicit formulas have been provided also for the relative entropy $S$ of charged states with respect to a half-line  (one-dimensional wedge), and for the first and second derivative $\frac{d}{d\l}S(\l)$ and $\frac{d^2}{d\l^2}S(\l)$ of $S$ as the half-line is translated by $\l$. In particular, the inequality
\[
\frac{d^2}{d\l^2}S(\l) \geq 0
\]
is satisfied with respect the considered states; this inequality (for all states) is the QNEC inequality for constant null shifts, see \cite{CF, LLR}. 

The higher dimensional analysis is more involved, mainly due to tangential contributions on the boundary. The relative entropy for charges localised on the time-zero hyperplane has been studied in \cite{L19} and in \cite{LLR}, see also \cite{CGP} for related analysis. However, these methods were not  
sufficient to compute $S(\l)$ and its first and second derivatives, where $\l$ is a null translation parameter. 

We now illustrate the conceptual steps leading to our results. 
\smallskip

\noindent
{\it The entropy of a vector relative to a real linear subspace.} Our analysis relies on the concept of entropy $S_k$ of a vector $k$ in a Hilbert space $\H$ with respect to a real linear subspace $H$ of $\H$. First of all, we may assume $H$ to be closed, by otherwise considering its closure. Secondly, we may assume that $H$ is a standard subspace of $\H$, by considering only the orthogonal component of $k$ in the standard part of $H$ (see Sect.  \ref{evector}). 
At this point, we may consider the modular operator $\Delta$ and the modular conjugation $J$ of $H$. Finally, we may assume that $H$ is factorial, namely $1$ is not an eigenvalue of $\Delta$, by considering the component of $k$ in the orthogonal of the spectral subspace of the $\Delta$-fixed vectors. So
\[
\overline{H + H' } = \H, \quad H\cap H' = \{0\}\ ,
\]
with $H'$ the symplectic complement of $H$. Our formula for the entropy of $k$ with respect to $H$ is
\ben\label{Skintro}
S_k = - \Im(k, P_H i \log\Delta\, k) \ .
\een
Here $P_H$ is the cutting projection associated with $H$, the projection $P_H : H + H' \to H$, $P_H: h + h' \mapsto h$, that plays a crucial role. We shall give a formula for $P_H$ in terms of $J$ and $\Delta$, 
\[
P_H = \big(a(\Delta) + Jb(\Delta)\big)^-
\]
(closure) with real functions $a$, $b$ given in Sect. \ref{evector}. We shall later compute $P_H$ in the needed context
as one of the ingredients to get our entropy formulas. 
\smallskip

\noindent
{\it The information carried by a classical wave.}
Suppose that information is encoded and transmitted by an undulatory signal, what is the local information carried by the wave packet at later time?

By a wave, we shall here mean a solution of the Klein-Gordon equation
\[
(\square + m^2)\Phi = 0 \ ,
\]
in particular a solution of the wave equation when the mass $m=0$. For definiteness, let's consider real waves  $\Phi$ 
with compactly supported, smooth time zero Cauchy data $\Phi|_{x^0 = 0}$, $\Phi'|_{x^0 = 0}$ that, as is well known, generate and uniquely determine $\Phi$. 
Classical field theory describes $\Phi$ by the stress-energy tensor $T_{\mu\nu}$, which provides the energy-momentum density
of $\Phi$ at any given time. Yet, the information, or entropy, carried by $\Phi$ is not easily visible or definable here. 

Now, the waves' linear space is naturally equipped with a time-independent symplectic form 
\[
\frac12\int_{x^0 =t}(\Phi'\Psi - \Psi'\Phi )dx \ ,
\] 
which is the imaginary part of a Hilbert scalar product, uniquely determined by the relativistic symmetries, so we get a Hilbert space $\H$ by completion, see \cite{J}. Waves with Cauchy data supported in the half-space $x^1 \geq 0$ form a real linear subspace of $\H$, whose closure is denoted by $H(W)$ ($W$ is the causally completed region generated by the half-space). Then we define the entropy $S_\Phi$ of $\Phi$ as the entropy of the vector $\Phi$ with respect to $H(W)$ as above defined. 

By the Bisognano-Wichmann theorem \cite{BW}, the modular group of $H(W)$ corresponds to the rescaled Lorentz boost symmetries in the $x^1$-direction. We shall show that
\ben\label{Sphi}
S_\Phi = 2\pi \int_{x^0 = 0,\, x^{1} \geq 0}x^{1}\, T_{00} \, dx \ ,
\een
with $T_{00} = \frac12(\Phi'^2 + |\nabla\Phi|^2 + m^2\Phi^2)$ the energy density of $\Phi$. 

In order to grasp more on the entropy density of $\Phi$, we have to study the dependence of $S_\Phi$ as $W$ gets translated. We shall give explicit formulas for the null translation case, that motivated our analysis; the  case of a general wedge is similar. 
\smallskip

\noindent
{\it The entropy of a quantum wave-function.}
A quantum, free, neutral, Boson one-particle state is described by a normalised real Klein-Gordon wave $\Phi$. Here however, as is known since the early days of Quantum Mechanics, $\Phi(x^0,{\bf x})$ is not the probability amplitude of finding the particle in position $\bf x$ at time $x^0$ (an approximate description of this probability is given by the Newton-Wigner wave, which is however unsatisfactory because it is not relativistic invariant, see \cite{H}). An intrinsic description of the localisation of $\Phi$ is given in \cite{BGL02}. 

However, a wave $\Phi$ is quantum mechanically localised in a wedge $W$, say $W = \{x: x^1> |x^0|\}$, if the Cauchy data $\Phi_0 , \Phi'_0$ are supported in the time-zero half-space $x^1 \geq 0$ (namely $H(W)$ coincides with the space $H(W)$ as defined in \cite{BGL02}).

In this quantum framework,  the vectors of the above Hilbert space $\H$ represent the particle states. Therefore, also in the QFT context, the entropy of the particle $\Phi$ relative to $W$ is to be given by formula \eqref{Sphi}. 

\smallskip

\noindent
{\it Relative entropy of coherent states in free QFT's.} One way to describe the free scalar QFT is to second quantise the above wave function Hilbert space $\H$ on the Bose Fock Hilbert space, and consider the associated representation of the Weyl algebra. The local von Neumann algebra $\A(O)$ associated with a spacetime region $O$ is the one generated by the Weyl unitaries $V(h)$ with 
$h \in H(O)$, the closure of the real linear subspace of $\H$ of waves localised in $O$. Due to the Weyl commutation relations \eqref{Weyl}, given a vector $k\in \H$, we have an automorphism of $\A(O)$ implemented by $V(k)$; we are particularly interested in the case $O=W$, namely in the automorphism
\[
\b_k = {\rm Ad}V(k)^*\big|_{\A(W)}
\]
which is  normal on the global algebra $\A(W)$ of the Rindler spacetime $W$  associated with the free scalar field. Of course, $\b_k$ acts locally according to the localisation of $k$, namely $\b_k$ acts identically on $\A(O_2)$ if $k$ is localised in $O_1$ and $O_2$ is 
spacelike separated from $O_1$; in other words, $\b_k$ is a localised automorphism.

The entropy $S(\b_k)$ of $\b_k$ (with respect to the vacuum state $\f$) is defined in general \cite{L97}; it is given by the relative entropy 
\[
S(\b_k) = S(\f\cdot \b^{-1}_k |\!| \f) = - (\xi , \log \Delta_{\xi_k, \xi} \xi) \ ;
\]
here $\xi$ is the vacuum vector, $\xi_k = V(k)\xi$ is the vector implementing the state $\f_k = \f\cdot \b^{-1}_k$ on $\A(W)$ and $\Delta_{\xi_k, \xi}$ is the relative modular operator associated with $\A(W)$ and the vectors $\xi, \xi_k$. 

As $\xi_k$ is the coherent vector on the Fock space associated with $k$, we may here equivalently say that $S(\b_k)$ is the relative entropy between the vacuum state $\f$ and the coherent state $\f_k$ on $\A(W)$. 

If $k\in H(W)$, then $\b_k$ is the inner automorphism implemented by $V(k)$; in this case it was shown in \cite{L18,L19} that that $S(\b_k)$ is equal to the entropy of the vector $k$ with respect to $H(W)$ (eq. \eqref{Shh}). We shall show that, in general, we have
\[
S(\b_k) = S_k
\]
with $S_k$ given by \eqref{Skintro}. 
At this point, we have all the ingredients to compute $S(\b_k)$ as the entropy of a classical wave. 

Let $W_\l = W + \l v$ the null translated wedge by the light-like vector $v = (1,1,0\dots, 0)$ (see Section \ref{Rindler} for an intrinsic definition)
and $S_{\Phi}(\l)$ the entropy of the wave $\Phi$, with respect to $H(W_\l)$, thus of the entropy of the associated localised automorphism $\b_\Phi$ of $\A(W_\l)$. 

We shall see that
\[
S_{\Phi}(\l) = 2\pi\int_{x^0 = \l,\,  x^1\geq \l}(x^1 -\l)T_{00}\, dx  \ ,
\]
with $T_{\mu\nu}$ the classical stress energy-momentum tensor of $\Phi$. We then compute that, in particular, 
\[
\frac{d^2}{d\lambda^2} S_{\Phi}(\l)   
= 2\pi \int_{x^0 = \l,  x^1= \l}\langle v, Tv\rangle dx \ ,
\]
so the QNEC inequality $\frac{d^2}{d\lambda^2} S_{\Phi}(\l)\geq 0$ is satisfied for coherent states due to the classical null energy condition.

Note that $\frac{d^2}{d\lambda^2} S_{\Phi}(\l)|_{\l=0}$ depends only on the spatial boundary terms of $\Phi$.

We end up this introduction by summarising in a diagram the logical dependence in our construction
\[
\CD
\boxed{\text {Entropy of a vector}}    @>\text{one-particle space}> \textit{classical field theory}>  \boxed{ \text{Entropy of a wave}} \\ @V\text{Fock }V 
\text{space}\phantom{X} V	@V\text{$2^{nd}$}V \text{quantisation} V  \\  \boxed{\text{Entropy of a coherent state} }   @>\text{local algebras}> \textit{quantum field theory}>     \boxed{\text {Entropy of Klein-Gordon QFT}}\endCD
\]
So the entropy of a vector $\Phi$ with respect to a  real linear subspace $H(O)$ has a double physical interpretation: classically, it measures the information carried by a wave packet in the spacetime region $O$; from the quantum point of view, it gives the vacuum relative entropy, 
on the algebra $\A(O)$ associated with  $H(O)$, of the coherent state induced by $\Phi$ on the Fock space. 

\section{The entropy of a vector}\label{evector}
Our entropy analysis will be based on functional analytic results concerning standard subspaces that we shall obtain in this section and in Section \ref{sect:cs}. We refer to \cite{L, LN} for an account of the results in this context that we will need. 

\subsection{The cutting projection}
Let $\H$ be a complex Hilbert space and $H$ a closed, real linear subspace of $\H$. The subspace $\H_0 =H\cap iH$ 
and $\H_\infty= H'\cap iH'$ are complex Hilbert subspaces and $H$ decomposes according to the direct sum decomposition
\ben\label{oplus}
\H = \H_0\oplus\H_s \oplus\H_\infty \ ,\quad H = H_0\oplus H_s \oplus H_\infty \ ,
\een
with $H_0 = \{0\}$, $H_\infty = \H_\infty$. 

A real linear subspace $H$ of $\H$ is said to be a {\it standard subspace} if $H$ is closed and
\[
\overline{H + iH} = \H\ , \qquad H\cap i H = \{0\}  .
\]
By \eqref{oplus}, every closed real linear subspace $H$ is the direct sum of a standard subspace $H_s$ and two trivial (i.e.\ complex linear) subspaces, and we may deal with standard subspaces only. 

With $H$ a standard subspace, the anti-linear operator $S: H + iH \to H + iH$, $S(h_1 + ih_2) = h_1 - ih_2$ is then  well-defined, closed, involutive. Its polar decomposition $S = J_H\Delta_H^{1/2}$ then gives an anti-linear, involutive unitary and a positive, non-singular, selfadjoint operator $\Delta_H$ on $\H$ such that
\[
\Delta_H^{is} H = H\ , \quad J_H H = H' \ ,
\]
here $H'$ is the symplectic complement $H'$, namely the orthogonal $H' = (iH)^{\bot_\mathbb R}$ of $iH$ with respect to the real scalar product $\Re(\cdot,\cdot)$.
We refer to \cite{T} and \cite{L,LN} for the modular theory of von Neumann algebras and standard subspaces. We shall often denote $\Delta_H$ and $J_H$ simply by $\Delta$ and $J$. 

Let $H\subset \H$ be a standard subspace. 
We say that $H$ is {\it factorial} if
\[
H\cap H' =\{0\}  ;
\]
equivalently, \[
\overline{H + H'} = \H \ .
\]
Thus $H +H'$ is dense in $\H$ and $H+ H'$ is the direct sum (as linear space) of $H$ and $H'$. 

$H$ is factorial iff $1$ is not in the point spectrum of $\Delta$ (see \cite{FG}). Every standard subspace is the direct sum of a factorial standard subspace and an abelian standard subspace, namely
\ben\label{oplus2}
\H = \H_f \oplus \H_a\ , \quad H = H_f \oplus H_a
\een
with $H_f \subset \H_f$ factorial and $\H_a$ the complexification of the real Hilbert space $H_a$. 

In this paper, we shall always assume the standard subspaces to be factorial. However, all results will have an immediate extension to the non-factorial case by the above direct sum decomposition, and indeed to the case of an arbitrary close real linear subspace by \eqref{oplus}.  

Let then $H$ be a standard, factorial subspace of $\H$. 
If $k\in H+H'$, we define 
\[
P_H k = h
\]
where $k = h + h'$ is the unique decomposition of $k$ as a sum of vectors $h\in H$ and $h'\in H'$. This defines a real linear, densely defined map of $\H$ into $\H$, the {\it cutting projection} relative to $H$.  
\begin{lemma}\label{LemmaPH}
We have:
\begin{itemize}
\item[(a)] $P_H$ is a closed, real linear, densely defined operator\ ;

\item[(b)] $P_H^2 = P_H$\ ;

\item[(c)] $P_H + P_{H'} = 1|_{H+H'}$\ ;

\item[(d)] $P^*_H = P_{iH} = -iP_H i$ (adjoint with respect to the real scalar product $\Re (\cdot , \cdot)$) \ ;

\item[(e)] $\Delta_H^{is} P_H = P_H \Delta_H^{is}$ \ .
\end{itemize}
\end{lemma}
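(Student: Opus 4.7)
My plan is to prove the five items essentially by unwinding the definitions, using only that $H$ is factorial standard (so the sum $H+H'$ is direct and dense), that $H'=(iH)^{\perp_\mathbb R}$, and that the modular data $\Delta$, $J$ preserve $H$ and $H'$. I will treat (a)--(c) as warm-up, then focus on (d), which I expect is the real content, and finish with the short (e).

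For \textbf{(a)} I would first note that density of the domain is the factoriality hypothesis $\overline{H+H'}=\H$; real linearity is immediate from the uniqueness of the decomposition $k=h+h'$ with $h\in H$, $h'\in H'$. For closedness I take a sequence $k_n=h_n+h_n'\to k$ in $\H$ with $h_n\to h$, and observe that then $h_n'=k_n-h_n\to k-h$. Since $H$ and $H'$ are closed, $h\in H$ and $k-h\in H'$, so $k\in\mathrm{dom}(P_H)$ and $P_Hk=h$. For \textbf{(b)}, any $h\in H$ has the trivial decomposition $h=h+0$, hence $P_Hh=h$ and $P_H^2=P_H$ on $H+H'$. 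For \textbf{(c)}, if $k=h+h'$ with $h\in H$, $h'\in H'$, then by symmetry $P_{H'}k=h'$ (using $H''=H$), so $(P_H+P_{H'})k=k$.

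The heart of the lemma is \textbf{(d)}. I would first establish the set-theoretic identity $(iH)'=iH'$ and hence $\mathrm{dom}(P_{iH})=iH+iH'=i(H+H')$, which is dense. Next, I would write the asserted adjoint in two equivalent forms: since $P_H$ is real linear, $M_i=-M_{-i}$ as real-linear operators, so $M_iP_HM_{-i}=M_{-i}P_HM_i=-iP_Hi$. To check $P_H^*=P_{iH}$ in the real scalar product, I take $k_1=h_1+h_1'\in H+H'$ and $k_2=u_2+u_2'\in iH+iH'$ and compute
\[
\Re\langle P_Hk_1,k_2\rangle=\Re\langle h_1,u_2\rangle+\Re\langle h_1,u_2'\rangle,\qquad
\Re\langle k_1,P_{iH}k_2\rangle=\Re\langle h_1,u_2\rangle+\Re\langle h_1',u_2\rangle.
\]
Both cross terms vanish: $\Re\langle h_1',u_2\rangle=0$ because $h_1'\in H'=(iH)^{\perp_\mathbb R}$ and $u_2\in iH$; and $\Re\langle h_1,u_2'\rangle=0$ because $u_2'=iv$ with $v\in H'$, and
$\Re\langle h_1,iv\rangle=-\Re\langle ih_1,v\rangle=0$
since $ih_1\in iH$ and $v\in(iH)^{\perp_\mathbb R}$. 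Thus $P_{iH}\subseteq P_H^*$. For the reverse inclusion I would invoke (a): $P_H$ is closed, so $P_H^{**}=P_H$; applying the same computation to $P_{iH}$ (which is also closed by the argument of (a) applied to $iH$) gives $P_H\subseteq P_{iH}^*$, i.e.\ $P_{iH}\subseteq P_H^*$ with matching domains after taking closures.

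The main obstacle I foresee is precisely the domain bookkeeping in (d): one must show that the formal equality $P_{iH}\subseteq P_H^*$ is an equality of closed densely defined real-linear operators and not merely of their restrictions to $H+H'\cap(iH+iH')$. I would handle this by noting that $P_{iH}$ is already closed, so it coincides with its closure; any vector in $\mathrm{dom}(P_H^*)$ then lies in $\mathrm{dom}(P_{iH})$ by the symmetric computation above applied to $P_{iH}$ and the closedness of both operators. Finally, \textbf{(e)} is immediate: for $k=h+h'\in H+H'$ the invariance $\Delta^{is}H=H$ and $\Delta^{is}H'=H'$ gives $\Delta^{is}k=\Delta^{is}h+\Delta^{is}h'\in H+H'$ with $\Delta^{is}h\in H$, so $P_H\Delta^{is}k=\Delta^{is}h=\Delta^{is}P_Hk$.
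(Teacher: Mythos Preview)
Your arguments for (a)--(c) and (e) are correct and coincide with the paper's. The issue is in (d): your ``reverse inclusion'' is circular. From the pairing computation you correctly obtain $P_{iH}\subseteq P_H^*$, and by replacing $H$ with $iH$ you obtain $P_H\subseteq P_{iH}^*$. But taking adjoints in $P_H\subseteq P_{iH}^*$ (using that $P_{iH}$ is closed and densely defined, so $P_{iH}^{**}=P_{iH}$) yields $P_{iH}\subseteq P_H^*$ again, not $P_H^*\subseteq P_{iH}$. Your final sentence about ``the symmetric computation \ldots\ and the closedness of both operators'' does not supply the missing direction; nothing you have written forces an arbitrary $\ell\in\mathrm{dom}(P_H^*)$ to lie in $iH+iH'$.

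The paper closes this gap differently: it observes that $P_H^*$ is itself an idempotent and then reads off $\ker(P_H^*)=\mathrm{Ran}(P_H)^{\perp_\mathbb R}=H^{\perp_\mathbb R}=iH'$ and $\overline{\mathrm{Ran}(P_H^*)}=\ker(P_H)^{\perp_\mathbb R}=(H')^{\perp_\mathbb R}=iH$, which pins down $P_H^*$ as the projection onto $iH$ along $iH'$. Within your more hands-on approach there is an equally short fix: take $\ell\in\mathrm{dom}(P_H^*)$ and set $m=P_H^*\ell$. Testing the adjoint relation against $h\in H$ gives $\Re\langle h,\ell-m\rangle=0$, hence $\ell-m\in H^{\perp_\mathbb R}=iH'$; testing against $h'\in H'$ gives $\Re\langle h',m\rangle=0$, hence $m\in (H')^{\perp_\mathbb R}=iH$. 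Thus $\ell=m+(\ell-m)\in iH+iH'$ and $P_{iH}\ell=m=P_H^*\ell$, which is exactly the inclusion $P_H^*\subseteq P_{iH}$ you were missing.
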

\begin{proof}
$(a)$: Let $k_n\in H + H'$ be a sequence such that $k_n \to k$ and $P_H k_n \to h$. Thus $k_n = h_n + h'_n$ with $h_n\in H$,
$h'_n\in H'$, and $h_n + h'_n \to k$ and $h_n\to h$, Hence $h\in H$ and $h'_n \to h'= k - h \in H'$, namely $k\in {\rm Dom}(P_H)$ and $P_H k = h$. 

$(b)$ and $(c)$ are obvious. 

$(d)$:  As $P_H^2 = P_H$, also ${P_H^*}^2 = P^*_H$. 
Then $P_H^* = P_{iH}$ because 
\[
{\rm Ran}(P_H^*)^- = {\rm ker}(P_H)^{\bot_{\mathbb R}} = {H'}^{\bot_{\mathbb R}} = iH\ , \qquad
{\rm ker}(P_H^*) = {\rm Ran}(P_H)^{\bot_{\mathbb R}} = {H}^{\bot_{\mathbb R}} = iH' \ .
\]

$(e)$ follows because $\Delta_H^{is}H = H$, $\Delta_H^{is}H' = H'$. 
\end{proof}
As a consequence of $(e)$, or of Theorem \ref{PH}, for any Borel function $f$ on $(0,+\infty)$, we have the equality
\[
 f(\Delta)P_H k =  P_H \bar f(\Delta^{-1}) k
\]
if $k$ belongs to the domain of (the closure of) both   $f(\Delta)P_H $ and $P_H \bar f(\Delta^{-1})$. 
\subsubsection{A formula for $P_H$} 
Let $H$ be a factorial standard subspace of $\H$ as above. With $0<\e<1/2$, denote by $E_\e$ the spectral projection of $\Delta$ relative to the subset $(\e, 1 -\e) \cup ((1 - \e)^{-1} , \e^{-1})$ of $\mathbb R$, and $\H_\e = E_\e \H$ the corresponding spectral subspace. 

By Lemma \ref{LemmaPH}, we have
\ben\label{EeH}
E_\e H \subset H\ , \quad E_\e H' \subset H' \ .
\een
We also consider the dense, complex linear subspace of $\H$ given by
\[
\D_0 \equiv \bigcup_{\e>0} \H_\e \ .
\]
Let us consider the functions
\[
a(\l) = \l^{-1/2}(\l^{-1/2} - \l^{1/2})^{-1}\ , \quad b(\l) = (\l^{-1/2} - \l^{1/2})^{-1}\ , \ \l\in (0,+\infty)\setminus\{1\}
\]
and set
\[
\D \equiv {\rm Dom}(a(\Delta))\cap {\rm Dom}(b(\Delta))\ .
\]
Clearly, $\D_0\subset \D$ and $\D_0$ is a core for both $a(\Delta)$ and 
$b(\Delta)$. Since the domain of the sum of two operators is the intersection of their domains, $\D$ is the domain of the operator given by the right hand side of \eqref{P1}. 
\begin{theorem}\label{PH}
We have $\D\subset H + H'$ and
\ben\label{P1}
P_H \big|_\D= \Delta^{-1/2}(\Delta^{-1/2} - \Delta^{1/2})^{-1} + J (\Delta^{-1/2} - \Delta^{1/2})^{-1}\ .
\een
Moreover, $\D$ is a core for $P_H$, namely
\[
P_H = \big(a(\Delta) + J b(\Delta)\big)^{-}
\]
where the bar denotes the closure. Indeed, already $\D_0$ is a core for $P_H$. 
\end{theorem}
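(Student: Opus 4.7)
The plan is to verify \eqref{P1} first on the spectral subspaces $\H_\varepsilon$, where all operators in sight are bounded, and then extend to $\D$ by a closure/spectral cutoff argument.

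First, fix $k\in\H_\varepsilon$. Because $\Delta$ is spectrally bounded and bounded away from $\{1\}$ on $\H_\varepsilon$, both $a(\Delta)$ and $b(\Delta)$ restrict to bounded operators there, $\H_\varepsilon$ is invariant under them, and $\H_\varepsilon$ lies in the domains of $\Delta^{\pm 1/2}$, hence of $S$ and of the Tomita operator $S'=J\Delta^{-1/2}$ of $H'$. I would then set
\[
h := a(\Delta)k + Jb(\Delta)k,\qquad h' := k-h,
\]
and verify $h\in H$ and $h'\in H'$ by checking the fixed-point conditions $Sh=h$ and $S'h'=h'$. The computation rests on three ingredients: (i) the elementary algebraic identities
\[
\lambda^{-1/2}b(\lambda)=a(\lambda),\qquad a(\lambda)-1=\lambda^{1/2}b(\lambda),
\]
obtained directly from the definitions of $a,b$; (ii) the intertwining $Jf(\Delta)=f(\Delta^{-1})J$ for any real Borel function $f$, together with $J\Delta^{1/2}J=\Delta^{-1/2}$, which yields $SJ=\Delta^{-1/2}$ and $S'J=\Delta^{1/2}$ as \emph{linear} operators; and (iii) the antilinearity of $S$, which is the point one must handle carefully when pushing $a(\Delta)$ through it, since it gives $Sa(\Delta)=a(\Delta^{-1})S$ (here the real-valuedness of $a$ matters). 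Putting these together, a direct manipulation shows $Sh=h$ and $S'h'=h'$, hence $k=h+h'\in H+H'$ with $P_Hk=h=a(\Delta)k+Jb(\Delta)k$. In particular $\H_\varepsilon\subset H+H'$ and the formula holds on $\D_0=\bigcup_\varepsilon\H_\varepsilon$.

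To extend to $\D$, take $k\in\D$ and set $k_\varepsilon:=E_\varepsilon k$. Since $H$ is factorial, $1$ is not in the point spectrum of $\Delta$, so $E_\varepsilon\to I$ strongly as $\varepsilon\to 0$; and since $a(\Delta),b(\Delta)$ are functions of $\Delta$ they commute with $E_\varepsilon$, giving $a(\Delta)k_\varepsilon=E_\varepsilon a(\Delta)k\to a(\Delta)k$ and similarly for $b$. Hence $P_H k_\varepsilon=a(\Delta)k_\varepsilon+Jb(\Delta)k_\varepsilon$ converges to $a(\Delta)k+Jb(\Delta)k$. By closedness of $P_H$ (Lemma \ref{LemmaPH}(a)), $k\in\mathrm{Dom}(P_H)$ with the value claimed in \eqref{P1}.

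Finally, to see that $\D_0$ is already a core for $P_H$, take an arbitrary $k\in\mathrm{Dom}(P_H)$ and write $k=h+h'$ with $h\in H$, $h'\in H'$. By \eqref{EeH}, $E_\varepsilon h\in H$ and $E_\varepsilon h'\in H'$, so $E_\varepsilon k\in (H+H')\cap\H_\varepsilon\subset\D_0\cap\mathrm{Dom}(P_H)$ and $P_HE_\varepsilon k=E_\varepsilon h=E_\varepsilon P_Hk$. Since $E_\varepsilon\to I$ strongly, $E_\varepsilon k\to k$ and $P_HE_\varepsilon k\to P_H k$, which is exactly the core property. The main obstacle I expect is clerical rather than conceptual: keeping the antilinearity of $S$, $J$ straight while juggling the unbounded functional calculus of $\Delta$, which is precisely why I would do the algebra on the bounded spectral cutoff $\H_\varepsilon$ and transfer the conclusion to $\D$ by closure rather than trying to compute on $\D$ directly.
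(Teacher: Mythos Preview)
Your proof is correct and follows the same overall architecture as the paper's: verify the formula on the bounded spectral subspaces $\H_\varepsilon$, extend to $\D$ by closedness of $P_H$, and prove the core property via $E_\varepsilon$-cutoffs. The one genuine difference is in the verification step on $\H_\varepsilon$: the paper assumes $\H_\varepsilon = H_\varepsilon + H'_\varepsilon$ (since $0,1\notin{\rm sp}(\Delta|_{\H_\varepsilon})$), takes an arbitrary decomposition $k=h+h'$, and checks that the right-hand side of \eqref{P1} sends $h\mapsto h$ and $h'\mapsto 0$ using $J\Delta^{1/2}h=h$, $J\Delta^{-1/2}h'=h'$; you instead \emph{define} $h:=a(\Delta)k+Jb(\Delta)k$ and verify $Sh=h$, $S'(k-h)=k-h$ directly. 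Your direction is slightly more self-contained, since it produces the decomposition $k=h+h'$ rather than presupposing $\H_\varepsilon\subset H+H'$, but the two computations are formally equivalent and use the same identities. The closure and core arguments are identical to the paper's.
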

\begin{proof}
First we assume that $0,1\notin {\rm sp}(\Delta)$, thus sp$(\Delta)$ is bounded. Then both $\D$ and $H + H'$ are equal to $\H$. So any $k\in \H$ can be written as $k = h + h'$, with $h\in H$, $h'\in H'$. Since $J \Delta^{1/2} h = h$ and $J \Delta^{-1/2} h' = h'$,
we have
\begin{multline}
\Delta^{-1/2}(\Delta^{-1/2} - \Delta^{1/2})^{-1}h + J (\Delta^{-1/2} - \Delta^{1/2})^{-1}h \\
= \Delta^{-1/2}(\Delta^{-1/2} - \Delta^{1/2})^{-1}h -  (\Delta^{-1/2} - \Delta^{1/2})^{-1}Jh \\
= \Delta^{-1/2}(\Delta^{-1/2} - \Delta^{1/2})^{-1}h -  (\Delta^{-1/2} - \Delta^{1/2})^{-1}\Delta^{1/2} h = h \ ,
\end{multline}
and
\begin{multline}
\Delta^{-1/2}(\Delta^{-1/2} - \Delta^{1/2})^{-1}h' + J (\Delta^{-1/2} - \Delta^{1/2})^{-1}h' \\
= \Delta^{-1/2}(\Delta^{-1/2} - \Delta^{1/2})^{-1}h' -  (\Delta^{-1/2} - \Delta^{1/2})^{-1}Jh' \\
= \Delta^{-1/2}(\Delta^{-1/2} - \Delta^{1/2})^{-1}h' -  (\Delta^{-1/2} - \Delta^{1/2})^{-1}\Delta^{-1/2} h' = 0 \ ,
\end{multline}
thus \eqref{P1} holds true in this case. 

Now, in the general case, we consider the orthogonal decomposition $\H = \H_\e \oplus \K_\e$, where $\K_\e = \H_\e^\bot$ is the complementary spectral subspace of $\Delta$; we then have a corresponding decomposition $H = H_\e \oplus K_\e$. 
As $0, 1\notin {\rm sp}(\Delta |_{\H_\e})$, we conclude that $\D_0\subset H + H'$ and that \eqref{P1} holds true with $\D_0$ instead of $\D$. 

Let now $k\in\D$ and set $k_\e = E_\e k \in \H_\e$. As $\e\to 0$, we have $k_\e \to k$ and, by the spectral theorem, 
\[
a(\Delta)k_\e \to a(\Delta)k \ ,
\]
\[
b(\Delta)k_\e \to b(\Delta)k \ .
\]
As $P_H$ is closed, we have $k\in{\rm Dom}(P_H) = H + H'$ and 
\[
P_H k = a(\Delta)k + Jb(\Delta)k \ ,
\]
that shows \eqref{P1}. 

Let now $k\in H + H'$. As $P_H$ commutes with $E_\e$, we have that $k_\e \to k$ and
$P_H k_\e = P_H E_\e k= E_\e P_H k \to P_H k$, showing that $\D_0$ is a core for $P_H$.  
\end{proof}
Note that $a(\lambda)=\frac{1}{1-\l}$ and $b(\l)=\frac{\l^{1/2}}{1-\l}$ are real, continuous functions on $(0,+\infty)\setminus\{1\}$, with a singularity of order $1$ at $\l =1$, and bounded out of a neighbourhood of $1$. In particular,  $a$ and $b$ go to $0$ with order $1$ and ${1/2}$ respectively, as $\l \to +\infty$. Hence, for any continuous function $f(\l)$ on $(0,\infty)$ with a zero of order $\geq 1$, at $\l =1$, and that goes to $\infty$   with order  $\leq{1/2}$ as $\l\to \infty$, we have the following corollary (that has an obvious extension to the case of a  Borel function $f$).
\begin{corollary}\label{kks}
Let $f: (0,+\infty) \to\mathbb C$ be a continuous function with the above growth condition. Then ${\rm Ran}(f(\Delta))\subset H+H'$. In particular, for any $k\in\H$, we have $k - \Delta^{is}k\in H + H'$, $s\in\mathbb R$, and
\[
\log\Delta\, k \in H + H', \quad k\in {\rm Dom}(\log\Delta) \ .
\]
\end{corollary}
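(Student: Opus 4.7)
The plan is to invoke Theorem \ref{PH}: since $\D = {\rm Dom}(a(\Delta)) \cap {\rm Dom}(b(\Delta))$ is contained in $H + H'$, it suffices to show that $f(\Delta) k \in \D$ for every $k \in {\rm Dom}(f(\Delta))$. The crucial observation is that the hypothesis on $f$ cancels the only singularity of the functions $a$ and $b$, which occurs at $\l = 1$.

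Explicitly, from the computation in Theorem \ref{PH} one reads $a(\l) = (1-\l)^{-1}$ and $b(\l) = \sqrt{\l}(1-\l)^{-1}$, so $a$ and $b$ possess only a simple pole at $\l = 1$ and are bounded on any complement of a neighborhood of 1: indeed $a(\l) \to 1$ as $\l \to 0^+$, $b(\l) \to 0$ as $\l \to 0^+$, and both decay at $+\infty$. The hypothesis $f(\l) = O(\l - 1)$ near 1 precisely kills this pole, so $af$ and $bf$ extend to continuous functions across $\l = 1$. Combining boundedness away from 1 with continuity at 1 gives a pointwise bound on $(0, +\infty)$ of the form
\begin{equation*}
|a(\l) f(\l)| + |b(\l) f(\l)| \leq C_1 + C_2 |f(\l)|,
\end{equation*}
with constants depending only on $f$.

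Applying the spectral theorem, for any $k \in {\rm Dom}(f(\Delta))$ we then estimate
\begin{equation*}
\int_0^{\infty} |a(\l) f(\l)|^2 \, d\mu_k(\l) \leq 2 C_1^2 \|k\|^2 + 2 C_2^2 \|f(\Delta) k\|^2 < \infty,
\end{equation*}
and analogously for $b$, where $\mu_k$ is the scalar spectral measure of $\Delta$ at $k$. Hence $f(\Delta) k \in {\rm Dom}(a(\Delta)) \cap {\rm Dom}(b(\Delta)) = \D$, and by Theorem \ref{PH}, $f(\Delta) k \in H + H'$, which is the first assertion.

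The two specializations follow immediately. For $f_s(\l) = 1 - \l^{is}$ one has $|f_s| \leq 2$ and $f_s(\l) = -is(\l - 1) + O((\l-1)^2)$ near 1, so the hypothesis holds and ${\rm Dom}(f_s(\Delta)) = \H$, giving $k - \Delta^{is} k \in H + H'$ for every $k \in \H$. For $f(\l) = \log \l$, the expansion $\log \l = (\l - 1) + O((\l-1)^2)$ near 1 again verifies the hypothesis, and the required domain condition is exactly $k \in {\rm Dom}(\log \Delta)$. The only conceivable obstacle is that the hypothesis on $f$ is purely local at $\l = 1$, whereas one seemingly needs global control over the spectrum of $\Delta$; but this is entirely defused by the explicit boundedness of $a, b$ away from 1, so no extra assumption on $f$ at $0$ or $\infty$ is needed beyond the membership $k \in {\rm Dom}(f(\Delta))$.
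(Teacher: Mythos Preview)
Your proof is correct and follows essentially the same route as the paper's: both arguments use Theorem~\ref{PH} to reduce the claim to showing $f(\Delta)k \in \D = {\rm Dom}(a(\Delta)) \cap {\rm Dom}(b(\Delta))$, and both verify this by observing that the hypothesis $f(\l) = O(\l-1)$ near $1$ cancels the only singularity of $a$ and $b$, while $a$ and $b$ are bounded away from $1$, so that $af$ and $bf$ are dominated by $C_1 + C_2|f|$. Your write-up is in fact a bit more explicit than the paper's (you compute $a(\l) = (1-\l)^{-1}$, $b(\l) = \sqrt{\l}(1-\l)^{-1}$ and spell out the spectral-measure estimate), but the logic is identical.
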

\begin{proof}
Let $k\in {\rm Dom}(f(\Delta))$. We have
\[
||P_H f(\Delta) k || \leq ||a(\Delta) f(\Delta)  k || + ||b(\Delta)  f(\Delta)  k ||
\]
and $||a(\Delta)  f(\Delta)  k ||,\,||b(\Delta)  f(\Delta)  k || < \infty$; indeed, given $0< \e <1$,  the products $af$ and $bf$ are bounded functions on $(1-\e,1)\cup(1,1+\e)$ and dominated by const.$|f|$ on $(0, 1-\e)\cup(1+\e, \infty)$. 
\end{proof}
\subsection{Definition of entropy of a vector}
Let $H$ be a standard subspace of $\H$. If $h\in H$, the entropy $S_h$ of $h$ w.r.t.\ $H$ is defined in \cite{L18,L19} by
\ben\label{Shh}
S_h = - (h, \log\Delta_H h) \ .
\een
Here, we want to extend this definition for arbitrary vectors in $\H$. 

Let $E(\l)$ be the spectral family of $\Delta_H$, namely
\[
\Delta_H = \int_0^{+\infty}\l dE(\l) \ .
\]
If $k\in \H$, we define the {\it entropy $S_k$ of the vector $k$} with respect to $H$ by
\ben\label{Sdef}
S_k = \Im(k, P_H A k)  \ ,
\een
where $A  =  -i\log\Delta_H$. Indeed $S_k$ takes meaning for any $k$ by defining
\[
-(k, P_H A k) \equiv  i\int_0^{+\infty}  a(\l)\log\l\, d(k,E(\l)k) - i\int_0^{+\infty}  b(\l)\log\l \, d(k,JE(\l)k) \ .
\]
In order to see that the above formula is well defined, note
firstly that $b(\l)\log\l$ is a bounded function, so the right integral is always finite. Secondly, $a(\l)\log\l$ is bounded on $(1 , +\infty)$ and positive on $(0,1)$. 

The entropy $S_k$ is thus finite if\, $-\!\int_0^1 a(\l)\log\l \, d(k,E(\l)k) < \infty$, otherwise $S_k = +\infty$. 
Indeed: 
\begin{proposition}\label{finS}
Let $k\in\H$. We have
\[
S_k < +\infty\quad {\rm iff}\quad  -\!\int_0^1\log\l \, d(k,E(\l)k) < +\infty \ ,
\]
namely iff $k\in {\rm Dom}(\sqrt{|\log \Delta|}\, E_-)$, with $E_-$ the negative spectral projection of $\log\Delta$. 

In particular, all vectors in ${\rm Dom}(\log\Delta)$ have finite entropy. 
\end{proposition}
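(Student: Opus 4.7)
My strategy is to unpack the definition of $S_k$ using the explicit form of the cutting projection from Theorem~\ref{PH}, and reduce the finiteness question to a single logarithmic integral near $\l=0$. First I would use $\Im(iz)=\Re z$ to rewrite
\[
S_k=\int_0^{+\infty} a(\l)\log\l\,d(k,E(\l)k)\;-\;\Re\int_0^{+\infty} b(\l)\log\l\,d(k,JE(\l)k),
\]
noting that $d(k,E(\l)k)$ is a finite positive Borel measure (so the first integral is real), while $(k,JE(\l)k)=\overline{(Jk,E(\l)k)}$ is, after polarisation and the antiunitarity of $J$, a complex linear combination of four positive finite spectral measures, hence a bounded complex measure on $(0,+\infty)$.

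Second, I would use the explicit forms $a(\l)=(1-\l)^{-1}$ and $b(\l)=\l^{1/2}(1-\l)^{-1}$ extracted from Theorem~\ref{PH} to check that $b(\l)\log\l$ is a bounded continuous function on $(0,+\infty)$: it extends continuously at $\l=1$ (since $\log\l/(1-\l)\to -1$), tends to $0$ as $\l\to 0^+$ since $b(\l)\to 0$, and behaves like $-\l^{-1/2}\log\l\to 0$ as $\l\to +\infty$. Integrating a bounded function against a bounded complex measure yields a finite quantity, so the second term above is always finite; the finiteness of $S_k$ is therefore equivalent to the finiteness of $\int_0^{+\infty} a(\l)\log\l\,d(k,E(\l)k)$.

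Third, I would split this real integral at $\l=1$. On $[1,+\infty)$ the integrand $a(\l)\log\l$ is bounded ($\to -1$ at $\l=1$ and $\to 0$ at $+\infty$), so the tail is always finite. On $(0,1)$ the integrand is non-positive; since $a(\l)\geq 1$ with $a(\l)\to 1$ as $\l\to 0^+$, while $a(\l)|\log\l|\to 1$ as $\l\to 1^-$ (so the integrand stays bounded near $\l=1$), the integrability of $a(\l)|\log\l|$ against the finite positive measure $d(k,E(\l)k)$ is equivalent to that of $|\log\l|$ itself. Invoking the spectral theorem, $\int_0^{1}|\log\l|\,d(k,E(\l)k)<+\infty$ is precisely the statement $k\in{\rm Dom}(\sqrt{|\log\Delta|}\,E_-)$. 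The last assertion then follows from the elementary inclusion ${\rm Dom}(T)\subseteq{\rm Dom}(\sqrt{|T|})$ valid for any self-adjoint operator $T$ (via $|x|\leq 1+x^2$), applied to $T=\log\Delta$.

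The main obstacle is not any single estimate but the bookkeeping around $\l=1$: both $a$ and $b$ have simple poles there, and one must see that the cancellations $a(\l)\log\l\to -1$ and $b(\l)\log\l\to -1$ render the spectral mass near $\l=1$ irrelevant to the finiteness criterion, leaving only the honest logarithmic divergence at $\l=0^+$, controlled exclusively by the positive spectral measure $d(k,E(\l)k)$ (not by the complex one).
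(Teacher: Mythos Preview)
Your proposal is correct and follows essentially the same route as the paper: the paper's proof (together with the discussion immediately preceding the proposition) also observes that $b(\l)\log\l$ is bounded, that $a(\l)\log\l$ is bounded on $(1,+\infty)$, and then compares $a(\l)\log\l$ with $\log\l$ on $(0,1)$ via their finite limits at $\l\to 1^-$ and $a(\l)\to 1$ at $\l\to 0^+$. Your write-up merely makes explicit the polarisation argument for the boundedness of the complex measure $d(k,JE(\l)k)$ and the domain inclusion ${\rm Dom}(\log\Delta)\subset{\rm Dom}(\sqrt{|\log\Delta|}\,E_-)$, which the paper leaves implicit.
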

\begin{proof}
The equivalence follows by the above discussion once we note that both functions $a(\l)\log\l $ and $\log\l$ have a finite limit as $\l \to 1^-$, and that $a(\l) \to 1$ as $\l \to 0^+$. 

The last assertion in the statement is now a clear consequence (it also follows from Corollary \ref{kks}). 
\end{proof}
\noindent
We shall also set $S^H_k= S_k$ if we need to specify the reference standard subspace $H$. 

If $H$ is any closed, real linear subspace of $H$, then the entropy $S^H_k$ of a vector $k\in \H$ with respect to $H$ is defined by
\[
S_k^H = S^{H_f}_{k_f} \,
\]
with $k_f$ the orthogonal component of $k$ in the factorial standard component $H_f$ of $H$ given by \eqref{oplus} and \eqref{oplus2}. 
\begin{proposition}\label{Sproperties}
Let $H$ be a factorial standard subspace of $\H$ and $k\in\H$. The following hold:
\begin{itemize}
\item[$(a)$] If $k = h + h'$, with $h\in H$ and $h'\in H'$,  then $S_k = -(h, \log \Delta\, h)$ (cf. Def.  \eqref{Shh}) ;
\item[$(b)$] $S_k = \Re( k,  iP_H i\log \Delta\, k) = -\Re( k,  P^*_H \log \Delta\, k)$ ;
\item[$(c)$] $S_k \geq 0$ and $S_k = 0$ iff $k\in H'$ ;
\item[$(d)$] If $\H = \H_1\oplus \H_2$ with $H = H_1\oplus H_2$, and $k = k_1\oplus k_2$,
then $S^H_{k}= S^{H_1}_{k_1} + S^{H_2}_{k_2}$, in particular $S^{H_1}_{k_1 } = S^{H}_{k_1 }$ ;
\item[$(e)$] $S_k = \lim_{\e\to 0^+}S_{k_\e}$; here $S_{k_\e} < +\infty$ and $S_{k_\e}$ is non-decreasing as $\e \searrow 0$ ;
\item[$(f)$] $S_k = \lim_{s\to 0} \Re (k, iP_H(\Delta^{is}k - k))/s  = -\lim_{s\to 0} \Im(k, P_H(\Delta^{is}k - k))/s $ ;
\item[$(g)$] If $k_n \to k$ in the graph norm of $\sqrt{|\log \Delta|}\, E_-$, then $S_{k_n} \to S_k$\ .
\end{itemize}
\end{proposition}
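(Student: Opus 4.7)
The plan is to use spectral truncation as the main tool. By \eqref{EeH}, $E_\e$ preserves both $H$ and $H'$, so for any $k \in \H$ the truncation $k_\e = E_\e k \in \H_\e \subset \D_0$ decomposes uniquely as $k_\e = h_\e + h'_\e$, with $h_\e = P_H k_\e = E_\e h \in H$ and $h'_\e = E_\e h' \in H'$. On $\H_\e$ every operator in sight ($\log\Delta$, $a(\Delta)$, $b(\Delta)$, $P_H$) is bounded, so computations proceed without domain concerns.

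I would first establish (a) on these spectral pieces. Since $\Delta^{is}$ preserves $H$ as a closed real subspace, its generator $A = i\log\Delta$ maps $H \cap \H_\e$ into $H$ and $H' \cap \H_\e$ into $H'$; hence $P_H(A k_\e) = A h_\e$, so $(k_\e, P_H A k_\e) = (h_\e, A h_\e) + (h'_\e, A h_\e)$. The second summand has vanishing imaginary part by the symplectic-complement identity $\Im(h', g) = 0$ for $h' \in H'$ and $g \in H$; the first yields $\Im(h_\e, i\log\Delta\, h_\e) = -(h_\e, \log\Delta\, h_\e)$. Thus $S_{k_\e} = -(h_\e, \log\Delta\, h_\e)$. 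Next I would prove (e) directly from the integral definition of $S_k$: on $(0,1)$ the integrand $a(\l)\log\l$ has definite (non-positive) sign, so its restriction to expanding spectral sets is monotone and monotone convergence applies; on $(1, +\infty)$ it is bounded and dominated convergence applies; the $b$-integral is bounded throughout. This identifies $S_k$ as the non-decreasing limit of $S_{k_\e}$, and combined with $h_\e \to h$ and $\log\Delta h_\e = E_\e \log\Delta h \to \log\Delta h$ (when $h \in \mathrm{Dom}(\log\Delta)$), it extends (a) to arbitrary $k \in H + H'$; the case $h \notin \mathrm{Dom}(\log\Delta)$ is handled by monotonicity forcing both sides to be $+\infty$.

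Items (b), (d), (f), (g) follow by short arguments. (b) is algebraic, using Lemma \ref{LemmaPH}(d) in the form $iP_Hi = -P_H^*$ substituted inside the integral formula for $S_k$. (d) follows from the block-diagonality of $P_H$, $J$, $\Delta$ with respect to the orthogonal splitting, so the integral defining $S_k$ splits as a sum. (f) follows from $(\Delta^{is} - 1)k/s \to A k$ as $s \to 0$ on $\mathrm{Dom}(\log\Delta)$ together with Corollary \ref{kks}, which ensures each $\Delta^{is}k - k$ lies in $H + H'$ so that $P_H$ can be applied and the limit taken inside the inner product; the general case then follows from (e). (g) is the continuity of $\int(-\log\l)\, d\|E(\l)k\|^2$ restricted to $(0,1)$ in the specified graph norm, the companion $b(\l)\log\l$-integral being always bounded.

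For (c), combining (a) with the elementary bound $-\log\l \geq 1 - \l$ gives
\[
-(h_\e, \log\Delta\, h_\e) = \int_0^\infty (-\log\l)\, d\|E(\l)h_\e\|^2 \geq \int_0^\infty (1 - \l)\, d\|E(\l)h_\e\|^2 = \|h_\e\|^2 - \|\Delta^{1/2} h_\e\|^2 = 0,
\]
the last equality because $\Delta^{1/2} h_\e = J h_\e$ and $J$ is antiunitary. Passing to the limit via (e) gives $S_k \geq 0$; the condition for equality in $-\log\l \geq 1 - \l$, holding only at $\l = 1$, forces the spectral measure of each $h_\e$ to concentrate at $\l = 1$, and factoriality of $H$ then yields $h_\e = 0$, so $h = 0$ and $k \in H'$. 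The main obstacle is the bookkeeping at the limit $\e \to 0^+$ when $h \notin \mathrm{Dom}(\log\Delta)$: both sides of (a) must become $+\infty$ in a consistent way, which is why the $a$-integrand is split by sign before applying monotone convergence, and why the integral definition of $S_k$ is adopted rather than a naive use of the spectral theorem for $\log\Delta$.
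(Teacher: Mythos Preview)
Your proof is correct and follows essentially the same route as the paper: spectral truncation to $\H_\e$, establishing (a) on these bounded pieces, then passing to the limit via (e) and (d) to get (a) in general and (c). The organization and the key tools (the invariance $E_\e H \subset H$, the boundedness of $P_H$ and $\log\Delta$ on $\H_\e$, the sign analysis of the $a$- and $b$-integrands) match the paper's almost step by step.

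Two minor differences are worth noting. For positivity in (c) the paper invokes Jensen's inequality for the concave function $\log$ against the normalized spectral measure of $h$, whereas you use the tangent-line inequality $-\log\l \geq 1-\l$ together with $\|\Delta^{1/2}h_\e\| = \|Jh_\e\| = \|h_\e\|$; these are equivalent, and your treatment of the equality case (support of $h_\e$ bounded away from $1$, hence $h_\e = 0$ by factoriality) is in fact more explicit than the paper's appeal to ``strict positivity''. For (b), calling it purely ``algebraic'' is slightly glib: the identity $iP_Hi = -P_H^*$ from Lemma~\ref{LemmaPH}(d) gives the second equality immediately, but the first equality --- that $(k, iP_Hi\log\Delta\, k)$ is \emph{real} --- is not a formal consequence of that identity alone; it comes from the computation in (a) on $\H_\e$ showing the truncated quantity is real, followed by the limit in (e). You have all the ingredients for this, so it is a matter of phrasing rather than a gap. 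Similarly, your sketch for (f) on $\mathrm{Dom}(\log\Delta)$ is fine, but ``the general case follows from (e)'' hides a limit interchange; the paper handles the finite-entropy case via the differentiability of the Fourier transform of the (finite) spectral measure and the infinite case via Fatou's lemma, which is a cleaner way to avoid that issue.
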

\begin{proof}
$(a)$, case $k\in {\rm Dom}(\log\Delta)$, $k\in {\rm Dom}(P_H)$ : In this case $k = h + h'$, with $h\in H$ and $h'\in H'$ and both $h$ and $h'$ in ${\rm Dom}(\log\Delta)$. Thus
\begin{multline}\label{Ska}
S_k =  \Im(k, P_H A k) =  \Im(h+ h', P_H A (h + h'))\\ = \Im(h+ h', A h) =  \Im(h, A h) = -(h, \log\Delta\, h)\ .
\end{multline}
We then have
\ben\label{pos}
S_k \geq 0 \ ;
\een
indeed, in order to check \eqref{pos}, we may assume $||h|| =1$ so, if $E(\l)$ is the spectral family of $\Delta$, we have
\[
(h, \log\Delta\, h) = \int \log(\l)d(h, E(\l) h) \leq  \log\Big(\int d(h, E(\l) h)\Big) = \log||h||^2 = 0\ ,
\]
by Jensen's inequality as $\log$ is a concave function. 

$(d)$ is obvious. 

$(e)$:
$S_{k_\e}$ is finite because the restrictions of $P_H$ and $\log\Delta$ to $\H_\e$ are bounded, and $S_{k_\e}\to S_k$ as $\e\to 0^+$ by the spectral theorem.  Moreover $S_{k_\e}\geq 0$ by \eqref{pos} and Theorem \ref{P1}. Thus $S_{k_\e}$ increases to $S_k$ by $(d)$. 

$(c)$: $S_k \geq 0$ by $(e)$ and the positivity of $S_{k_\e}$ \eqref{pos}. Suppose $S_k = 0$, then $S_{k_\e}= 0$. By $(a)$ and the strict positivity in \eqref{Shh}, we then have $k_\e\in H'$, thus $k\in H'$. 

$(a)$, general case, now follows as $S_k = \lim_{\e \to 0^+}S_{k_\e}$. 

$(b)$: 
Immediate from the definition \eqref{Sdef} and $(d)$ of Lemma \ref{LemmaPH}. Note that 
$(h,  iP_H i\log \Delta\, h) = -(h, \log \Delta\, h) $ is real if $h\in H$. 

$(f)$: We assume that $S_k$ is finite.
By Lemma \ref{finS}, it suffices to show that
\[
\lim_{s\to 0} i(k, E_-\, a(\Delta)(\Delta^{is}k - k))/s = -(k, E_- \,a(\Delta)\log\Delta\, k) \ ,
\]
with $E_-$ the negative spectral projection of $\log\Delta$, thus that
\[
\lim_{s\to 0} \int_0^1 i\frac{(1-\l^{is})}{s} a(\l)  d(k, E(\l)k) =  \int_0^1 \log\l\, a(\l)d(k, E(\l)k) \ ,
\]
that can be checked in Fourier transform as the measure on $\mu$ on $\mathbb R$ given by $d\mu(x)= a(e^x)d(k, E(e^x)k)$ on $(-\infty,0)$ and zero on $[0,+\infty)$ is finite and also $\int x d\mu(x)$ is finite, so $\hat\mu$ is differentiable. The case of infinite $S_k$ follows by similar reasoning by using Fatou's lemma. 

$(g)$ follows by Proposition \ref{finS} and preceding comments. 
\end{proof}

\section{Entropy of Klein-Gordon waves}
We now apply the results in the previous section to compute the entropy of a wave, a solution of the Klein-Gordon equation as will be soon defined, relative to a wedge region $W$, namely the causally complete spacetime region associated with a half-space at a fixed time. 

As is well known, the space of waves carries a natural symplectic form and is embedded in the quantum one-particle Hilbert space $\H$. For every spacetime region $O$, we then have the standard subspace $H(O)$ generated by particle vector states localised in $O$, and we are aimed to compute the entropy of a wave vector with respect to $H(W)$. Here $W$ is a wedge region in the Minkowski spacetime, we choose a fixed wedge
\[
W = \{x\in\mathbb R^{d+1}: x^1 > |x^0|\} \ ;
\]
all other wedges are Poincar\'e translated of $W$. 

The Bisognano-Wichmann theorem \cite{BW} gives 
\ben\label{LW}
\Delta_{H(W)}^{-is} = U(L_W(2\pi s)) \ ,
\een
where $U$ is the unitary representation of the Poincar\'e group on $\H$ and $L_W$ is the one-parameter group of  boosts that preserve $W$. As a consequence
\[
H(W)' = H(W') \ ,
\]
where $W'$ is the spacelike complement of $W$. 

We remark that, in this paper, we use standard tensor notation. We let   
$g$ be the Lorentz metric, i.e.\ the diagonal matrix with signature $1, -1, \dots ,-1$. We denote the Lorentz product $\sum^d_{\mu=0} x^\mu x_\nu$ by $x\cdot x$;  
the partial derivative with respect to $x^\mu$ by $\partial_\mu$ and the partial derivative 
with respect to $x_\mu$ by $\partial^\mu$. Finally we use the relation 
$\partial^\mu = \sum_{\nu = 0}^d g^{\mu\nu}\partial_\nu$.

\subsection{Explicit expression for the entropy}
We now compute explicitly the entropy of a QFT particle state in the case of the quantum free neutral Boson field given by real (generalized) solutions of the Klein-Gordon equation, see \cite{J}. 

\smallskip

Let $\Phi \in S'(\mathbb R^{d+1})$ be a solution of the Klein-Gordon equation, namely $\Phi$ is a tempered distribution and
\[
(\square + m^2)\Phi =0
\]
where $\square = \partial_0^2 - \partial_1^2 - \dots - \partial_d^2$. Then, the Fourier transform $\hat \Phi$ is supported in $\mathfrak H_m  \cup -\mathfrak H_m $, where $\mathfrak H_m =\{p: p\cdot p + m^2 = 0, {p}_{0}\geq 0\}$ is the positive Lorentz hyperboloid. If $\Phi$ is real, then $\hat \Phi(-p) = \overline{\hat \Phi(p)}$, so $\hat \Phi$ is determined by its restriction to $\mathfrak H_m$. 

As is known, if $f, g$ are real, smooth  functions on $\mathbb R^d$ with compact support, there is a unique smooth real solution $\Phi$, on $\mathbb R^{d+1}$,  of the Klein-Gordon equation $(\square + m^2)\Phi = 0$ with Cauchy data $\Phi |_{x^0 =0} = f$, $\Phi' |_{x^0 =0} = g$, where the prime denotes the time derivative $\partial_0$. It can be easily seen that 
$\hat\Phi|_{{\mathfrak H}_m} \in L^2( {\mathfrak H}_m , d\Omega_m)$.

We shall denote by $\cal T$ the real linear space of all real solutions of the Klein-Gordon equation $\Phi$  that are real smooth functions with compactly supported Cauchy data. 
We shall also say that $\Phi$ is a {\it wave} if $\Phi\in\T$.  

Given $\Phi\in\cal T$, we then have a vector $[\Phi]$ in the Hilbert space $\H= L^2( {\mathfrak H}_m , d\Omega_m)$, 
\ben\label{vector}
[\Phi] = \sqrt{2\pi}\hat\Phi |_{\mathfrak H_m}\ ;
\een
these vectors form a real linear, total subset of $\H\equiv L^2( \mathfrak H_m , d\Omega_m)$.  

Denote by $\Phi^\pm$ the positive/negative frequency wave associated with $\Phi\in\T$, namely $\Phi=\Phi^+ +\Phi^-$ where
$\hat \Phi^\pm = \hat \Phi$ on $\pm\mathfrak H_m$ and $\hat \Phi^\pm = 0$ on $\mp\mathfrak H_m$. As $\Phi$ is real, we have
\[
\hat{\Phi}^-(p) = \overline{\hat \Phi^+(-p)}\ , \quad \Phi \in \cal T \ ,
\]
thus the map $\Phi \mapsto [\Phi]$ is one-to-one. 

The one-particle scalar product between the waves $\Phi,\Psi\in \cal T$, given by 
\[
([\Phi], [\Psi]) = \int_{\mathfrak H_m}\overline{\hat\Phi(p)} \hat\Psi(p) d\Omega_m(p) \ , 
\]
is equal to\[
([\Phi], [\Psi]) = i\int_{x^0 =\l} ( \Psi^+ \partial_0 \bar \Phi^+ - \bar \Phi^+ \partial_0  \Psi^+)d{ x}
\]
and is independent of the choice of the constant time integration hyperplane $x^0=\l$. We have
\[
\overline{([\Phi], [\Psi])} = -\int_{-\mathfrak H_m}{\overline{\hat\Phi(p)}} {\hat\Psi(p)}d\Omega_m(p) 
\]
and
\[
2i\Im ([\Phi], [\Psi]) = ([\Phi], [\Psi]) - \overline{([\Phi], [\Psi])} 
=  \int_{\mathfrak H_m\cup-\mathfrak H_m} \overline{\hat{\Phi}(p)} \hat\Psi(p)d\Omega_m(p) 
\]
so, as $\Phi, \Psi$ are real, we have
\ben\label{Cd}
\Im ([\Phi], [\Psi]) =  \frac12\int_{x^0 =\l} \big( \Psi \Phi' -  \Phi  \Psi'\big)dx \   ,
\een
where we denote also by $\Phi'$ the time derivative $\partial^{0}\Phi$ of $\Phi$. We also  denote by
$\Phi_\l$, $\Phi'_\l$ the associated Cauchy data at time $\l$
\[
\Phi_\l({\bf x}) =  \Phi(x^0, {\bf x})|_{x^0 =\l},\quad  \Phi'_\l({\bf x}) =  \partial_{0}\Phi(x^0, {\bf x})|_{x^0 = \l}\ .
\]
Let $O\in \mathbb R^{d+1}$ be a spacetime region. We define the closed, real Hilbert subspace $H(O)$ generated by $[f]$ as $f$ runs in the smooth, real functions on $\mathbb R^{d+1}$ compactly supported in $O$. Here $[f]$ denotes the vector 
$\sqrt{2\pi}\hat f |_{{\mathfrak H}_m} \in \H$. 

Later we shall consider more general solutions of the Klein-Gordon equation that give vectors in $\H$ by \eqref{vector}. Denote by $\bar\T$ the space of real, tempered distributions $\Phi \in S'(\mathbb R^{d+1})$ whose Fourier transform $\hat\Phi$ has the form
\[
\hat\Phi(p) = C(p)\delta(p\cdot p - m^2) \ ,
\]
with $C$ a Borel function on ${\mathfrak H}_m \cup -{\mathfrak H}_m$ 
with $\int_{{\mathfrak H}_m}|C(p)|^2 \delta(p\cdot p - m^2)dp < \infty$. As $\Phi$ is real, $C(-p) = \overline{C(p)}$, so $C$ is determined of ${\mathfrak H}_m$. Clearly, the map \eqref{vector} gives a real linear bijection of $\bar\T$ into $\H$. The inverse map takes $C(p)$ on ${\mathfrak H}_m$, extends it to ${\mathfrak H}_m$ by $C(-p) = \overline{C(p)}$ and Fourier anti-transforms it (apart from a $(2\pi)^{-1/2}$ factor). 

If $\Phi\in\bar\T$, the Cauchy data $\Phi_0 = \Phi|_{x^0 =0}$ and $\Phi'_0 =\Phi'|_{x^0 =0}$ are defined by Fourier anti-transform
\[
\Phi_0({\bf x}) = \frac12\Re \int C(\omega({\bf p}), {\bf p}) e^{i{\bf x}\cdot {\bf p}}\frac{d{\bf p}}{\omega({\bf p})}\ ,
\quad 
\Phi'_0({\bf x}) = \frac12\Im\int C(\omega({\bf p}), {\bf p}) e^{i{\bf x}\cdot {\bf p}}d{\bf p}\ ,
\]  
with $\omega({\bf p}) = \sqrt{{\bf p}^2 + m^2}$. So $\Phi_0$ and $\Phi'_0$ naturally give vectors in $\H$. 
\begin{proposition}\label{supp}
If $\Phi\in\bar\T$, $[\Phi]$ belongs to $H(W)$ if and only if both ${\rm supp}(\Phi_0)$ and ${\rm supp}(\Phi_0')$ are contained in half-space $x^1 \geq 0$. 
\end{proposition}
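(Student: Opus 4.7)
My plan is to treat the two implications separately, using the explicit formula \eqref{Cd} for the symplectic form in terms of Cauchy data together with causal propagation of solutions of the Klein--Gordon equation.

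For the direct implication ($\Rightarrow$), suppose $[\Phi]\in H(W)$, so by definition $[\Phi]$ is a limit in $\H$ of vectors $[f_n]$ with $f_n\in C_c^\infty(W,\mathbb R)$. To each such $f_n$ I associate the wave $\Phi_{f_n}=E*f_n$, where $E$ is the Pauli--Jordan commutator function; the identification $[\Phi_{f_n}]=[f_n]$ holds since $\hat E$ is concentrated on the mass shell. By finite propagation speed, $\Phi_{f_n}$ is supported in $J(\mathrm{supp}(f_n))\subset J(\overline W)$, and an elementary computation shows that $J(\overline W)\cap\{x_0=0\}=\{x_1\ge 0\}$. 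Hence the Cauchy data $(\Phi_{f_n})_0$ and $(\Phi_{f_n})'_0$ are supported in $\{x_1\ge 0\}$. Passing to the limit in $\H$, and using continuity of the Cauchy-data map (in the distributional sense induced by the one-particle norm), yields $\mathrm{supp}(\Phi_0),\mathrm{supp}(\Phi_0')\subset\{x_1\ge 0\}$.

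For the reverse direction ($\Leftarrow$), suppose $\Phi_0,\Phi_0'$ are supported in $\{x_1\ge 0\}$. I first approximate them by smooth compactly supported real data $(f_n,g_n)$ with supports in $\{x_1>\varepsilon_n\}$ for a sequence $\varepsilon_n\searrow 0$, and let $\Phi^{(n)}$ denote the solution of the Klein--Gordon equation with Cauchy data $(f_n,g_n)$. To realise $[\Phi^{(n)}]$ as a vector in $H(W)$, I would use the familiar time-cutoff construction: pick $\chi\in C^\infty(\mathbb R)$ with $\chi(x_0)=0$ for $x_0<-T$ and $\chi(x_0)=1$ for $x_0>T$, and set $F_n=(\square+m^2)(\chi\cdot\Phi^{(n)})$. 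Then $F_n$ is smooth and supported in the slab $\{|x_0|\le T\}$, while the standard advanced/retarded Green's function calculus gives $E*F_n=\Phi^{(n)}$, hence $[F_n]=[\Phi^{(n)}]$. By finite propagation speed, $\Phi^{(n)}$ restricted to this slab is supported in $\{x_1>\varepsilon_n-T\}$, so choosing $T<\varepsilon_n/2$ places $\mathrm{supp}(F_n)\subset W$. This shows $[\Phi^{(n)}]\in H(W)$, and letting $n\to\infty$ gives $[\Phi]\in H(W)$.

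The main technical obstacle is the geometric matching between the time cutoff and the light cone in the reverse direction: the slab of width $2T$ must be narrow enough with respect to the distance $\varepsilon_n$ between the support of the Cauchy data and the edge $\{x_1=0\}$ of the wedge, so that the smeared source $F_n$ stays inside $W$. Since $\varepsilon_n\to 0$ forces $T\to 0$, one has to verify that the $[F_n]$ still converge in $\H$; this boils down to continuity of the Cauchy-data-to-vector correspondence on the appropriate one-particle Sobolev space, and needs to be tracked carefully near the boundary $\{x_1=0\}$.
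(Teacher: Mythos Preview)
Your approach differs substantially from the paper's, and while the forward direction is fine, the reverse direction is more laborious than necessary and the gap you flag is real (though slightly mislocated).

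The paper's proof is a one-line duality argument. It uses wedge duality $H(W)=H(W')'$ (a standard fact for the free field), so that $[\Phi]\in H(W)$ iff $\Im([\Phi],[\Psi])=0$ for every $[\Psi]\in H(W')$. For $\Psi=f\star D$ with $f\in C_c^\infty(W',\mathbb R)$, finite propagation speed forces the Cauchy data of $\Psi$ into $\{x_1\le 0\}$; since such $\Psi$ generate $H(W')$ and since \eqref{Cd} expresses $\Im([\Phi],[\Psi])$ as the pairing of Cauchy data on $\{x_0=0\}$, the orthogonality condition is \emph{equivalent} to $\mathrm{supp}(\Phi_0),\mathrm{supp}(\Phi_0')\subset\{x_1\ge 0\}$. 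No approximation of $\Phi$ itself is required, and both implications fall out simultaneously.

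Your constructive route for $(\Leftarrow)$ bypasses duality but pays for it. The time-cutoff step is correct and yields $[F_n]=[\Phi^{(n)}]\in H(W)$ for each $n$; however, the difficulty is not that $T\to 0$ --- the vector $[F_n]$ depends only on $\Phi^{(n)}$, not on the particular cutoff $\chi$ or on $T$ --- so convergence of $[F_n]$ is exactly convergence of $[\Phi^{(n)}]$. The genuine issue is step~1: for a general $\Phi\in\bar\T$ the Cauchy data lie in spaces of type $H^{1/2}\times H^{-1/2}$, and you must approximate them in the one-particle norm by smooth data supported in $\{x_1>\varepsilon_n\}$. This is a Sobolev density statement at the critical exponent $s=1/2$ for a half-space; it is true, but it is precisely the analytic work that the symplectic-duality argument sidesteps. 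If you want to keep your approach self-contained, this is the lemma you need to supply.
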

\begin{proof}
Let $f$ be a real, smooth function on $\mathbb R^d$ with ${\rm supp}(f)\subset W'$. With $D\in S'(\mathbb R^{d+1})$ the commutator function, namely $\hat D(p) = \pm\delta(p\cdot p - m^2)$ if $\pm p_0 \geq 0$,  the convolution $f \star D\in \cal T$ and $[f \star D]\in H(W')$. As $\hat D(p) = 0$ if $p\cdot p <0$, the Cauchy data $(f \star D)_0$, $(f \star D)'_0$ are supported in $x^1 \leq 0$. As vectors $ [f \star D]$ of this form are dense in $H(W')$, it follows that $[\Phi]\in\cal T$ belongs to $H(W)$ iff $ \Im([\Phi], [\Psi]) = 0$ for all such $\Psi$, thus iff the Cauchy data of $\Phi$ are supported in $x^1\geq 0$ by \eqref{Cd}. 
\end{proof}
 Therefore
\[
A[\Phi] = 2\pi [\partial_W \Phi] \ , \quad \Phi\in\T \ .
\]
Here $A=  - i\log\Delta_{H(W)} $ and $\partial_W = x^0\partial_1 + x^1\partial_0$. 

By Prop. \ref{finS}, every $\Phi\in\T$ has finite entropy w.r.t.\ $H(W)$ because $[\Phi]\in {\rm Dom}(\log\Delta_{H(W)})$. 

If $F$ is a function on $\mathbb R^d$, we denote by $F^+$ the function $F^+({\bf x}) = F({\bf x})$ if $x^1 \geq 0$ and $F^+({\bf x}) =0$ if $x^1 <0$. 
\begin{proposition}\label{P+}
Let $\Phi\in\cal T$ and $H = H(W)$ and set $\Psi = \partial_W\Phi$. Then $[\Psi] \in H + H'$ and
$P_H [\Psi]$ is the vector in $\H$ corresponding to the wave with Cauchy data $\Psi^+_{0}$, ${\Psi'_{0}}^+$.
\end{proposition}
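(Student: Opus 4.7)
The plan is to exhibit explicitly the decomposition $[\Psi] = [\Psi_+] + [\Psi_-]$ with $[\Psi_+]\in H$ and $[\Psi_-]\in H'$ by splitting the time-zero Cauchy data of $\Psi$ at the hyperplane $x_1 = 0$, and then to invoke Proposition~\ref{supp}.

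First I would compute the Cauchy data of $\Psi = \partial^W\Phi = x_0\partial_1\Phi + x_1\partial_0\Phi$ at $x_0 = 0$. Using $(\square+m^2)\Phi = 0$ to substitute $\partial_0^2\Phi|_{x_0=0} = (\Delta - m^2)\Phi_0$, a short calculation gives
\[
\Psi_0({\bf x}) = x_1\Phi'_0({\bf x}),\qquad \Psi'_0({\bf x}) = \partial_1\Phi_0({\bf x}) + x_1(\Delta - m^2)\Phi_0({\bf x}),
\]
both smooth and compactly supported on $\mathbb{R}^d$. In particular $\Psi\in\mathcal{T}$, consistent with the commutator identity $[\square + m^2,\partial^W]=0$ (which shows $\partial^W$ preserves the space of Klein--Gordon solutions).

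Next I would split these data along $\{x_1 = 0\}$: take $\Psi_+$ to be the (generalized) wave with Cauchy data $(\Psi_0^+,(\Psi'_0)^+)$, and set $\Psi_- := \Psi - \Psi_+$, whose Cauchy data are supported in $x_1\leq 0$. The key admissibility point is that $(\Psi_0^+,(\Psi'_0)^+)$ determines an element of $\bar{\mathcal{T}}$, i.e.\ that the associated $C(\omega,{\bf p}) = 2\omega\widehat{\Psi_0^+}({\bf p}) + 2i\widehat{(\Psi'_0)^+}({\bf p})$ is square-integrable on $\mathfrak{H}_m$ with respect to $d\Omega_m$. Since $\Psi_0^+$ is Lipschitz with compact support (hence in $H^1(\mathbb{R}^d)$) and $(\Psi'_0)^+$ is bounded with compact support, this reduces to $\int\bigl(\omega|\widehat{\Psi_0^+}|^2 + \omega^{-1}|\widehat{(\Psi'_0)^+}|^2\bigr)d{\bf p}<\infty$, which is immediate in the massive case from $\omega\geq m$, with only a short low-momentum argument needed when $m=0$ (and $d\geq 2$). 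By linearity of the assignment \eqref{vector} one then has $[\Psi] = [\Psi_+] + [\Psi_-]$, and Proposition~\ref{supp} applied to each summand yields $[\Psi_+]\in H(W) = H$ and $[\Psi_-]\in H(W') = H'$. Consequently $[\Psi]\in H + H'$ and, by the defining property of the cutting projection, $P_H[\Psi] = [\Psi_+]$, which is exactly the vector determined by the Cauchy data $(\Psi_0^+,(\Psi'_0)^+)$.

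The only genuine obstacle in this plan is the Hilbert-space admissibility check, i.e.\ verifying that the truncated Cauchy data define an element of $\bar{\mathcal{T}}$; once $\Psi_\pm\in\bar{\mathcal{T}}$ is secured, Proposition~\ref{supp} immediately delivers the splitting and the conclusion is formal.
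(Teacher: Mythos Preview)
Your argument is correct, but the paper takes a different and shorter route. Rather than constructing the split by hand and then checking that the truncated Cauchy data define vectors in $\H$, the paper first observes that $2\pi[\Psi] = i\log\Delta_{H}[\Phi]$ by the Bisognano--Wichmann identification $A = 2\pi\partial^W$, so $[\Psi]$ lies in the range of $\log\Delta$; Corollary~\ref{kks} then gives $[\Psi]\in H+H'$ immediately, with no analytic estimate needed. Once $[\Psi]\in H+H'$ is known abstractly, the decomposition $[\Psi]=h+h'$ exists, the bijection $\bar{\mathcal T}\leftrightarrow\H$ produces $\Psi^{\pm}\in\bar{\mathcal T}$ with $[\Psi^{\pm}]=h,h'$, and Proposition~\ref{supp} forces their Cauchy data to be supported in the two half-spaces; since these must add up to $(\Psi_0,\Psi'_0)$, they are exactly $(\Psi_0^{+},(\Psi'_0)^{+})$ and $(\Psi_0^{-},(\Psi'_0)^{-})$.

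The trade-off is this: your approach is self-contained at the level of Cauchy data and makes the content of the cutting projection concrete, but it costs you the Sobolev/low-momentum admissibility check you flagged (and leaves the $m=0$, $d=1$ case dangling). The paper's approach exploits the operator-theoretic machinery already developed in Section~\ref{evector}: the whole point of Corollary~\ref{kks} is precisely to guarantee $\log\Delta\,k\in H+H'$ without ever touching the PDE side, so the admissibility of the truncated data comes out as a \emph{consequence} rather than a hypothesis to verify. In effect, you are re-proving a special case of Corollary~\ref{kks} by direct estimates instead of invoking it.
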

\begin{proof}
We have $2\pi[\partial_W\Phi] = i\log\Delta [\Phi]$, so $[\Psi]$ is in the domain of $\log\Delta$ and $[\Psi] \in H + H'$ by Cor.  \ref{kks}. By Prop. \ref{supp} we then have $\Psi = \Psi^+ + \Psi^-$ with $\Psi^\pm \in \bar\T$, $\Psi^+\in H(W)$, $\Psi^-\in H(W')$. We have $P_H [\Psi] = [\Psi^+]$. As discussed, $\Psi^\pm_0$;  we have ${\Psi_0^\pm}'$ give vectors in $\H$ and 
$\Psi_0 = \Psi^+_0 + \Psi^-_0$  and $\Psi'_0 = {\Psi_0^+}' + {\Psi_0^-}' $. 
\end{proof}
If $\Phi\in\cal T$, we shall denote the entropy $S_{[\Phi]}$ of the vector $[\Phi]$ simply by $S_\Phi$. 
\begin{lemma}\label{add}
Let $\Phi\in\T$ be a real, smooth wave, $(\square + m^2)\Phi  = 0$ with Cauchy data $\Phi_0 = f,\Phi'_0 = g$. Then
\[
S_{\Phi} = S_\Gamma + S_\Lambda \ ,
\]
where $\Gamma$ and $\Lambda$ denote the waves with Cauchy data respectively 
$\Gamma_0 = f$, $\Gamma'_0 = 0$ and $\Lambda_0 = 0$, $\Lambda'_0 = g$. 
\end{lemma}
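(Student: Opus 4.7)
The plan is to expand the entropy as a real quadratic form in $[\Phi]$ and to check that, under the splitting $[\Phi]=[\Gamma]+[\Lambda]$, the cross terms vanish. Since the map from Cauchy data to $[\,\cdot\,]$ is real linear, $[\Phi]=[\Gamma]+[\Lambda]$ in $\H$. With $H=H(W)$ and $A=i\log\Delta_H$, the definition \eqref{Sdef}, combined with the complex sesquilinearity of $(\cdot,\cdot)$ and the real linearity of $P_HA$, gives
\[
S_{\Phi}=S_{\Gamma}+S_{\Lambda}+\Im\bigl([\Gamma],P_HA[\Lambda]\bigr)+\Im\bigl([\Lambda],P_HA[\Gamma]\bigr),
\]
so the whole proof reduces to showing that the two cross pairings are zero.

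The decisive observation is that the boost generator $\partial^W=x_0\partial_1+x_1\partial_0$ swaps position and momentum Cauchy data at $x_0=0$. For an arbitrary wave with data $(\xi_0,\xi'_0)$, direct differentiation at $x_0=0$ gives
\[
(\partial^W\cdot)_0=x_1\,\xi'_0,\qquad (\partial^W\cdot)'_0=\partial_1\xi_0+x_1\,\xi''_0\big|_{x_0=0},
\]
with $\xi''_0$ determined by $\xi_0$ via the Klein-Gordon equation. Applied to $\Gamma$ (data $(f,0)$), this yields a wave whose position datum at $x_0=0$ vanishes; applied to $\Lambda$ (data $(0,g)$), it yields a wave whose momentum datum at $x_0=0$ vanishes.

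By Proposition \ref{P+}, the cutting projection $P_H$ preserves this structure, since it acts only by truncating each Cauchy datum to $x_1\geq 0$ and thus sends zero to zero. Consequently $P_HA[\Gamma]=2\pi[\Theta]$ with $\Theta_0=0$, and $P_HA[\Lambda]=2\pi[\Xi]$ with $\Xi'_0=0$. Substituting into the time-zero symplectic integral \eqref{Cd}, the integrand of $\Im([\Gamma],[\Xi])$ equals $\Xi_0\Gamma'_0-\Gamma_0\Xi'_0$ and vanishes pointwise, because $\Gamma'_0=0$ and $\Xi'_0=0$; symmetrically, the integrand of $\Im([\Lambda],[\Theta])$ vanishes because $\Lambda_0=0$ and $\Theta_0=0$. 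This closes the argument.

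The one subtlety to watch is domain hygiene: the truncated data are not smooth at $x_1=0$, so $\Xi$ and $\Theta$ lie in $\bar\T$ rather than $\T$, and formula \eqref{Cd} must be used in its mild extension to pairings between $\T$ and $\bar\T$. This is routine, either by approximating truncated data by smooth compactly supported ones in the natural symplectic topology or by a direct Fourier computation along the lines of the proof of \eqref{Cd}. Once this extension is recorded, the pointwise vanishing of the two cross integrands gives $S_{\Phi}=S_{\Gamma}+S_{\Lambda}$.
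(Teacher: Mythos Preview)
Your proof is correct and follows essentially the same route as the paper: expand the quadratic form $S_\Phi$ into diagonal and cross terms, observe that $\partial^W$ sends $\Gamma$ to a wave with vanishing time-zero position datum and $\Lambda$ to one with vanishing time-zero momentum datum, then use Proposition~\ref{P+} and the Cauchy-data form \eqref{Cd} of the symplectic pairing to see that each cross integrand vanishes pointwise. The paper writes the cross terms directly as integrals over $\{x_0=0,\,x_1\geq 0\}$ rather than naming the truncated waves $\Theta,\Xi$, but the content is identical; your explicit remark on the $\T$--$\bar\T$ pairing is a domain point the paper leaves implicit.
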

\begin{proof}
We have
\begin{multline}
\frac{1}{2\pi}S_\Phi = \Im(\Phi,P_H A \Phi) = \Im(\Gamma+ \Lambda,P_H A(\Gamma+\Lambda))\\
= \Im(\Gamma, P_H A \Gamma) + \Im(\Lambda,P_H A \Lambda) + \Im(\Gamma, P_H A \Lambda) + \Im(\Lambda,P_H A \Gamma)\\ = \frac{1}{2\pi}S_\Gamma + \frac{1}{2\pi}S_\Lambda+ \Im(\Gamma,P_H A \Lambda) + \Im(\Lambda,P_H A \Gamma)
\ ,
\end{multline}
so we have to show that $\Im(\Gamma,P_H A  \Lambda) + \Im( \Lambda,P_H A\Gamma)= 0$. Now
\[
(\partial_W\Lambda)' = (x^0\partial_{1}\Lambda+ x^1\partial_{0}\Lambda)' 
= \partial_{1}\Lambda+ x^0\partial_{1}\Lambda' + x^1 \Lambda''  = \partial_{1}\Lambda+ x^0\partial_{1}\Lambda' + x^1 (\nabla^2 - m^2)\Lambda 
\]
thus $(\partial_W\Lambda)' |_{x^0 = 0} = 0$. 
We also have
\[
(\partial_W\Gamma)|_{x^0=0}  = (x^0\partial_{1}\Gamma + x^1\partial_{0}\Gamma)|_{x^0=0}= 0
\]
so
\[
\Im(\Gamma,P_H A \Lambda) = \int_{x^0 = 0, x^1 \geq 0} \big(\Gamma' \partial_W\Lambda- (\partial_W\Lambda)' \Gamma\big) \, dx   = 0
\]
and similarly $\Im(\Lambda,P_H A \Gamma) = 0$. 
\end{proof}
With $\l\geq 0$, we shall denote by $W_\l$ the null translated wedge $W_\l = W + \l v$, with $v = (1,1, 0\cdots,0)$. Then
\begin{proposition}
Let $\Phi\in S(\mathbb R^{d+1})$ be a real, smooth wave, $(\square + m^2)\Phi  = 0$. The entropy of $[\Phi]$ w.r.t.\ $H(W_\l)$ is given by
\ben\label{maine}
S_{\Phi}(\l) = \pi\int_{x^0 =\l, x^1\geq \l} \big(\Psi\Phi'    - \Phi \Psi' \big)dx \  ,
\een
with $\Psi =  \partial_{W_\l}\Phi$,  $\partial_{W_\l} =(x^0 -\l)\partial_{1} + (x^1 - \l)\partial_{0}$ and $\Phi' = \partial_{0}\Phi$. 
\end{proposition}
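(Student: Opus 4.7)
The plan is to apply the general formula $S_k = \Im(k, P_H A k)$ from Section \ref{evector} with $k = [\Phi]$, $H = H(W_\l)$ and $A = i\log\Delta_{H(W_\l)}$, combined with translation-covariant analogues of the Bisognano-Wichmann theorem and of Proposition \ref{P+}.

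First, I transport Bisognano-Wichmann by translation covariance of the one-particle net: if $T_\l$ denotes the unitary implementing the spacetime translation by $\l v$, then $H(W_\l) = T_\l H(W)$ and hence $\Delta_{H(W_\l)} = T_\l \Delta_{H(W)} T_\l^{-1}$, so $\Delta_{H(W_\l)}^{-is}$ is the one-parameter group of boosts preserving $W_\l$, whose generator on smooth solutions is precisely $\partial^{W_\l} = (x_0 - \l)\partial_1 + (x_1 - \l)\partial_0$. Therefore $A[\Phi] = 2\pi[\partial^{W_\l}\Phi] = 2\pi[\Psi]$, and since $[\Phi] \in \mathrm{Dom}(\log\Delta_{H(W_\l)})$, Corollary \ref{kks} yields $[\Psi] \in H(W_\l) + H(W_\l)'$, so
$$S_\Phi(\l) = 2\pi\,\Im\bigl([\Phi], P_{H(W_\l)} [\Psi]\bigr).$$

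Second, the translated version of Proposition \ref{P+} identifies $P_{H(W_\l)}[\Psi]$ explicitly with the vector $[\Psi^+]$ corresponding to the wave $\Psi^+$ whose Cauchy data on the hyperplane $x_0 = \l$, which passes through the edge of $W_\l$, are $\Psi_\l\, \chi_{x_1 \geq \l}$ and $\Psi'_\l\, \chi_{x_1 \geq \l}$ (here $\Psi_\l, \Psi'_\l$ denote the time-$\l$ Cauchy data of $\Psi$). Since the one-particle symplectic form \eqref{Cd} is independent of the chosen Cauchy hyperplane, I evaluate it on $x_0 = \l$: the integrand $\Psi^+ \Phi' - \Phi(\Psi^+)'$ vanishes for $x_1 < \l$ and coincides with $\Psi\Phi' - \Phi\Psi'$ for $x_1 \geq \l$, giving
$$\Im\bigl([\Phi], [\Psi^+]\bigr) = \tfrac{1}{2}\int_{x_0=\l,\,x_1\geq\l}\bigl(\Psi\Phi' - \Phi\Psi'\bigr)\,dx,$$
and multiplication by $2\pi$ yields \eqref{maine}.

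The principal technical point is justifying the translated analogues of Propositions \ref{supp} and \ref{P+} in the present Schwartz setting. Although $\Psi = \partial^{W_\l}\Phi$ is no longer in $\T$ because of the unbounded polynomial coefficients $x_0 - \l$ and $x_1 - \l$, its time-$\l$ Cauchy data remain smooth with polynomial growth, so they define a distribution in $\bar\T$ and the associated vector $[\Psi] \in \H$; the cut-off by $\chi_{x_1 \geq \l}$ then produces a pair of vectors in $H(W_\l)$ and $H(W_\l)'$ whose sum is $[\Psi]$, exactly as in the proof of Proposition \ref{P+}. Rapid decay of $\Phi$ and its derivatives at spatial infinity ensures that all integrals converge absolutely and that no boundary terms at infinity arise.
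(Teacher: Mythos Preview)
Your proposal is correct and follows essentially the same route as the paper's own (very terse) proof: apply the definition $S_k = \Im(k, P_H A k)$ together with the Bisognano--Wichmann identification of $A$ with the boost generator and Proposition~\ref{P+} for the cutting projection, then evaluate the symplectic form \eqref{Cd} on the Cauchy surface through the edge of $W_\l$; the paper simply says ``do this for $\l=0$ and then translate $\Phi$'', whereas you equivalently translate the wedge via covariance. One small overcaution: your worry that $\Psi = \partial^{W_\l}\Phi$ falls out of $\T$ is unfounded, since multiplying compactly supported Cauchy data by the polynomial $x_1-\l$ (and the analogous computation for $\Psi'_\l$) still yields compactly supported Cauchy data at time $\l$, so $\Psi\in\T$ and no extra justification is needed.
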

\begin{proof}
Formula \eqref{maine} follows  the definition \eqref{Sdef} and Prop. \ref{P+} if $\l = 0$. The general case is then immediate by translating $\Phi$. 
\end{proof}
We shall consider the {\it energy density} of the wave $\Phi$,
\[
T_{00}(x) = 
\frac12\big(\Phi'^2(x)  + |\nabla \Phi(x)|^2 + m^2\Phi^2(x)\big)
\]
with $\nabla$ the gradient in the space variables (see next section for the full definition of the stress-energy tensor). 
\begin{theorem}\label{maine2}
Let $\Phi\in \T$ be a real Klein-Gordon wave. The entropy $S_{\Phi}(\l)$ of  $[\Phi]$ w.r.t.\ the wedge region $W_\l$ is given by 
\ben\label{mainth}
S_{\Phi}(\l) = 2\pi\int_{x^0 = \l,\,  x^1\geq \l}(x^1 -\l)T_{00}\, dx \ .
\een
\end{theorem}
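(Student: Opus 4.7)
The plan is to start from formula \eqref{maine}, which gives $S_\Phi(\l)$ as a Cauchy-data integral, and reduce it to the stress-energy integral. By translating $\Phi$ along the null direction $v=(1,1,0,\dots,0)$ it suffices to treat $\l=0$; the general case follows because both sides of \eqref{mainth} are covariant under the null translation. So the task is to show
\[
\pi\int_{x_0=0,\,x_1\geq 0}\bigl(\Psi\Phi'-\Phi\Psi'\bigr)dx \;=\; 2\pi\int_{x_0=0,\,x_1\geq 0}x_1 T_{00}\,dx,
\]
with $\Psi=x_0\partial_1\Phi+x_1\partial_0\Phi$ and $T_{00}=\tfrac12(\Phi'^2+|\nabla\Phi|^2+m^2\Phi^2)$.

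Next I evaluate the Cauchy data of $\Psi$ on the time-zero slice. Directly, $\Psi|_{x_0=0}=x_1\Phi'$. Differentiating in $x_0$ and setting $x_0=0$, I get $\Psi'|_{x_0=0}=\partial_1\Phi+x_1\,\partial_0^2\Phi$. Here I use the Klein-Gordon equation $\partial_0^2\Phi=\nabla^2\Phi-m^2\Phi$, so that $\Psi'|_{x_0=0}=\partial_1\Phi+x_1(\nabla^2\Phi-m^2\Phi)$. Substituting into the integrand gives
\[
\Psi\Phi'-\Phi\Psi' \;=\; x_1\Phi'^2-\Phi\,\partial_1\Phi-x_1\Phi\,\nabla^2\Phi+x_1 m^2\Phi^2 \qquad (\text{at }x_0=0).
\]

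The main technical step is an integration by parts on $-x_1\Phi\nabla^2\Phi$ over the half-space $x_1\geq 0$. Using $\nabla\cdot(x_1\Phi\nabla\Phi)=\Phi\,\partial_1\Phi+x_1|\nabla\Phi|^2+x_1\Phi\nabla^2\Phi$ and applying the divergence theorem, the boundary integral on $x_1=0$ vanishes because the integrand carries the prefactor $x_1$, and the contribution at spatial infinity vanishes since $\Phi\in\T$ has compactly supported Cauchy data. Hence
\[
-\!\int_{x_1\geq 0}x_1\Phi\,\nabla^2\Phi\,dx \;=\; \int_{x_1\geq 0}\bigl(\Phi\,\partial_1\Phi+x_1|\nabla\Phi|^2\bigr)dx.
\]
The term $\Phi\,\partial_1\Phi$ cancels the $-\Phi\,\partial_1\Phi$ already present, and collecting what remains produces exactly $x_1(\Phi'^2+|\nabla\Phi|^2+m^2\Phi^2)=2x_1 T_{00}$, yielding the claimed formula.

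I do not anticipate a serious obstacle: the only delicate point is justifying the vanishing of the boundary term, and that is immediate from the $x_1$ prefactor together with the compact support of the Cauchy data. For a general $\Phi\in\T$ (only the Cauchy data, not $\Phi$ itself, are compactly supported) the spatial slice integrals converge because the integrand is a polynomial in $x_1$ times Schwartz-class Cauchy data, so the integration by parts is valid. The extension from $\l=0$ to general $\l$ is a straightforward change of variables using that $W_\l=W+\l v$ and that $[\Phi]\in H(W_\l)$ is characterised by Cauchy data supported in $x_1\geq \l$ on the slice $x_0=\l$.
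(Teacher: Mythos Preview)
Your proof is correct. The computation at $\l=0$ is clean: the identification of $\Psi|_{x_0=0}$ and $\Psi'|_{x_0=0}$, the use of the Klein--Gordon equation, and the integration by parts via $\nabla\!\cdot\!(x_1\Phi\nabla\Phi)$ all check out, and the boundary term indeed vanishes because of the $x_1$ prefactor (and compact support handles infinity). The reduction to $\l=0$ by null-translation covariance is legitimate, since both the entropy and the stress-energy integral transform the same way under the unitary $U(\l v)$.

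Your route differs from the paper's. The paper first proves an additivity lemma (Lemma~\ref{add}) showing $S_\Phi = S_\Gamma + S_\Lambda$, where $\Gamma$ and $\Lambda$ are the waves with Cauchy data $(f,0)$ and $(0,g)$; the cross terms $\Im(\Gamma,P_H A\Lambda)$ and $\Im(\Lambda,P_H A\Gamma)$ are shown to vanish separately. It then computes $S_\Phi(\l)$ in the two special cases $\Phi_\l=0$ and $\Phi'_\l=0$ and adds the results. Your argument bypasses the additivity lemma entirely and handles the full wave in one stroke with a single divergence identity; the $-\Phi\,\partial_1\Phi$ term that the paper integrates by parts to a boundary contribution is, in your computation, cancelled algebraically by the $+\Phi\,\partial_1\Phi$ arising from $\nabla\!\cdot\!(x_1\Phi\nabla\Phi)$. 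Your approach is shorter and avoids an auxiliary lemma; the paper's decomposition, on the other hand, makes explicit that the ``position'' and ``velocity'' Cauchy data contribute independently to the entropy, which has some conceptual appeal.
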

\begin{proof}
Let's calculate \eqref{maine}. First we assume $\Phi_\l  =0$. Then
\begin{multline}\label{first}
S_{\Phi}(\l) = \pi\int_{x^0 =\l,\, x^1\geq \l}   \Phi' \Psi\, dx \\
 = \pi\int_{x^0 =\l,\, x^1\geq \l}  \Phi' \big((x^0 -\l)\partial_{1}\Phi + (x^1 - \l)\partial_0\Phi\big)dx
= \pi\int_{x^0 =\l,\, x^1\geq \l}  (x^1 - \l)\Phi'^2 \,  dx\ .
\end{multline}
Now we assume instead that $\Phi'_\l =0$. Then
\begin{multline}\label{second}
S_{\Phi}(\l) = -\pi\int_{x^0 =\l,\, x^1\geq \l}    \Phi \Psi' dx  
=-\pi\int_{x^0 =\l,\, x^1\geq \l} \big((x^0 -\l) \partial_{1}\Phi + (x^1-\l)\partial_0\Phi\big)'\, dx
\\
=-\pi\int_{x^0 =\l,\, x^1\geq \l}  \Phi  \partial_{1}\Phi\, dx 
-\pi\int_{x^0 =\l,\, x^1\geq \l}(x^1 -\l)  \Phi \,  \Phi'' \, dx
\\
=\frac{\pi}2\int_{x^0 =\l,\, x^1= \l}  \Phi^2 \, dx 
- \pi\int_{x^0 =\l,\, x^1\geq \l}(x^{1} -\l)  \Phi \, (\nabla^2 - m^2)\Phi\, dx\\
= \pi\int_{x^0 =\l,\, x^{1}\geq \l}(x^1 -\l)  \big(|\nabla\Phi |^2 + m^2\Phi^2\big)dx \ .
\end{multline}
The theorem then follows by Lemma \ref{add} and \eqref{first}, \eqref{second}. 
\end{proof}

\subsection{Entropy, energy-momentum and QNEC inequality}
\label{KG:b}

Let $\Phi\in\T$, namely $\Phi$ is a real smooth solution of the Klein-Gordon equation with compactly supported Cauchy data. 
The classical stress-energy density tensor $T = (T_{\mu\nu})$ associated with $\Phi$ is given by
\[
T_{\mu\nu} =  \partial_\mu\Phi\, \partial_\nu\Phi - g_{\mu\nu}\cal L \ ,
\]
where ${\cal L} = \frac12 \big(\sum_{\nu = 0}^d  \partial_\nu\Phi\, \partial^\nu\Phi - m^2 \Phi^2\big)$
is the Lagrangian density. In particular 
\[
T_{00} = \frac12 \big( \sum_{\mu=0}^d (\partial_\mu \Phi)^2 + m^2\Phi^2   \big)\ ,
\]
\[
T_{0l} =  \partial_0\Phi\, \partial_l\Phi \ , \quad l = 1, 2,... ,d \ .
\]
$T_{0l} $ is the  energy flux across the $x^l = 0$ time-zero surface, $l = 1, 2,\dots, d$.

With $v\in\mathbb R^{d+1}$, consider the energy-momentum in the $v$-direction
\[
\langle v, Tv\rangle \equiv \sum^d_{\mu,\nu = 0}T_{\mu\nu}v^\nu v^\mu \ .
\]
Clearly $T_{00} = \langle e, T e\rangle \geq 0$, 
where $e$ the time-like vector $e = (1,0,\dots 0)$. By Lorentz covariance, we then have
\ben\label{Tv}
\langle v, Tv\rangle \geq 0 
\een
for all time-like vectors $v$, hence for all 
light-like vectors $v$ (classical null energy condition).

The continuity equation holds
\ben\label{cont}
\sum_{\mu=0}^d \partial^\mu T_{\mu\nu} = 0 \ ,
\een
therefore
\[
\int_{x^0 =0,\, x^1 \geq 0} \partial^0 T_{0\nu}\, dx  = -\sum^d_{k=1}\int_{x^0 =0,\, x^1 \geq 0} \partial^k T_{k\nu}\, dx=-\int_{x^0 =0,\, x^1 = 0} T_{1 \nu}\, dx \ .
\] 
\begin{theorem}\label{Ew}
Let $\Phi\in\T$. We have:
\[
\frac{d}{d\lambda} S_{\Phi}(\l)  =
-2\pi\int_{x^0 = \l,  x^1\geq \l}\left(T_{00} + T_{10}\right) \,  dx    \ ,
\]
\[
\frac{d^2}{d\lambda^2} S_{\Phi}(\l)   
= 2\pi \int_{x^0 = \l,  x^1= \l}\langle v, Tv\rangle dx \ ,
\]
where $v$ is the light-like vector $v = (1,1,0\dots, 0)$.
In particular, the inequality
\[
\frac{d^2}{d\lambda^2} S_{\Phi}(\l)  \geq 0
\]
holds true by \eqref{Tv}. 
\end{theorem}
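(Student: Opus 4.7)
The plan is to differentiate the formula from Theorem \ref{maine2} twice in $\lambda$, at each stage using the continuity equation \eqref{cont} to trade the time derivative of $T_{00}$ (respectively $T^{00}+T^{10}$) for spatial divergences, and then integrating by parts in the spatial variables. At the second differentiation two kinds of boundary contributions at the codimension-two surface $x_0=x_1=\lambda$ will arise, and their sum is exactly the null-null component $\langle v,Tv\rangle$.

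More precisely, I would first compute $\frac{d}{d\lambda}S_\Phi(\lambda)$ starting from $S_\Phi(\lambda)=2\pi\int_{x_0=\lambda,\,x_1\geq\lambda}(x_1-\lambda)T_{00}\,dx$. The factor $(x_1-\lambda)$ annihilates the moving-boundary contribution at $x_1=\lambda$, leaving a term $-2\pi\int T_{00}\,dx$ from differentiating the explicit $-\lambda$, and $2\pi\int(x_1-\lambda)\partial_0 T_{00}\,dx$ from advancing the time slice. Using \eqref{cont} with $\nu=0$ to replace $\partial_0 T^{00}$ by a sum of spatial divergences $-\sum_k\partial_k T^{k0}$ and integrating by parts against $(x_1-\lambda)$, the tangential derivatives ($k\geq 2$) drop out by compact support of the Cauchy data together with finite speed of propagation (so $T_{\mu\nu}(\lambda,\cdot)$ is compactly supported on each time slice), while the radial $\partial_1$ piece contributes $-\int T^{10}\,dx$. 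This yields the first identity.

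I would then differentiate the resulting expression once more. Since $T^{00}+T^{10}$ does not vanish on $x_1=\lambda$, a genuine moving-boundary term $+2\pi\int_{x_1=\lambda}(T^{00}+T^{10})\,dx$ now appears, in addition to the interior contribution $-2\pi\int_{x_1\geq\lambda}\partial_0(T^{00}+T^{10})\,dx$. Reapplying \eqref{cont} to both $T^{0\,0}$ and $T^{0\,1}$ and integrating by parts, tangential directions again drop out while the radial $\partial_1$ pieces produce a second boundary contribution $+2\pi\int_{x_1=\lambda}(T^{10}+T^{11})\,dx$. Adding the two boundary contributions gives
\[
\frac{d^2}{d\lambda^2}S_\Phi(\lambda)=2\pi\int_{x_0=\lambda,\,x_1=\lambda}(T^{00}+2T^{10}+T^{11})\,dx,
\]
which one identifies with $2\pi\int\langle v,Tv\rangle\,dx$ for the null vector $v=(1,1,0,\dots,0)$. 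The QNEC inequality then follows from \eqref{Tv}, or directly from the observation that $\langle v,Tv\rangle=(v^\mu\partial_\mu\Phi)^2\geq 0$ since the Lagrangian piece of $T_{\mu\nu}$ is killed by $g_{\mu\nu}v^\mu v^\nu=0$.

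The main obstacle I anticipate is purely bookkeeping: one must verify that the moving-boundary term (from differentiating the region $\{x_1\geq\lambda\}$) and the IBP boundary term (from using continuity to trade time derivatives for spatial divergences) appear with matching signs, so that they reinforce rather than cancel at the second differentiation and combine precisely into the quadratic form $T^{00}+2T^{10}+T^{11}$. No serious analytic issues should arise, the compact support of $T_{\mu\nu}$ on each time slice legitimizing all the integrations by parts.
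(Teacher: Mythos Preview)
Your proposal is correct and follows essentially the same route as the paper: differentiate the formula of Theorem~\ref{maine2}, use the continuity equation \eqref{cont} to convert $\partial_0 T^{0\nu}$ into spatial divergences, and integrate by parts so that for the first derivative the weight $(x_1-\lambda)$ absorbs the boundary term while for the second derivative the two boundary contributions combine into $T^{00}+2T^{10}+T^{11}=\langle v,Tv\rangle$. Your explicit remark that compact Cauchy data plus finite propagation speed give compactly supported $T_{\mu\nu}(\lambda,\cdot)$, hence justify all integrations by parts in the tangential directions, is a point the paper leaves implicit.
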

\begin{proof}
Now
\[
S_{\Phi}(\l) = 2\pi\int_{x^0 = \l,\,  x^1\geq \l}(x^1 -\l)T_{00}\, dx  \ ,
\]
so, by the continuity equation \eqref{cont} and integrating by parts, we get 
\begin{multline*}
\frac{d}{d\lambda} S_{\Phi}(\l) = -2\pi\int_{x^0 = \l,\,  x^1\geq \l}\left(T_{00}  - (x^1 -\l)\partial_0T_{00}\right) dx\\
= -2\pi\int_{x^0 = \l,\,  x^1\geq \l}\left(T_{00} - (x^1 -\l)\partial_1T_{10}\right)   dx =
-2\pi\int_{x^0 = \l,\,  x^1\geq \l}\left(T_{00} + T_{10}\right) \,  dx \ ,
\end{multline*}
because $\int_{x^0 = \l,\,  x^1\geq \l}(x^1 -\l)\partial_l T_{l0} dx = 0$ if $l\geq 2$.  

Concerning the second derivative, by the continuity equation \eqref{cont} we have
\begin{align*}
\frac{d^2}{d\lambda^2}S(\lambda)  & =  2\pi\, \int_{x^0=\l,\, x^1 =\l} \left(T_{00} +  T_{10}\right) \,dx 
- 2\pi\, \int_{x^0 =\l,\, x^1\geq \l} \partial_0(T_{00} + T_{10})\,dx  \\
 & = 2\pi\, \int_{x^0=\l,\, x^1=\l} \left(T_{00} +  T_{10}\right) \,dx 
 +2\pi\, \int_{x^0=\l,\, x^1=\l} (T_{10} + T_{11}) dx  \\
 & = 2\pi\, \int_{x^0=\l,\, x^1=\l} \left(T_{00} +  2T_{10}+T_{11}\right) \,dx \ , 
\end{align*}
and this completes the proof.
\end{proof}
For completeness,  note the  manifestly non negative expression

\[
\frac{d^2}{d\lambda^2}S(\lambda)  =  2\pi\, \int_{x^0 = \l,\, x^1= \l} (\partial_0\Phi + \partial_1\Phi)^2\, dx\, ,  
\]
as $T_{00} + 2T_{10} + T_{11} = (\partial_0\Phi + \partial_1\Phi)^2$.

\section{Entropy of coherent states}\label{sect:cs}
We now move to the QFT framework and begin with an abstract second quantisation analysis. 

\subsection{Preliminaries on relative entropy}
Recall that the relative entropy between two normal, faithful states $\f_1, \f_2$ of a von Neumann algebra $\M$
is given by Araki's formula  \cite{Ar}
\ben\label{S12}
S(\f_1 |\!| \f_2) = - (\xi_1 , \log\Delta_{\xi_2,\xi_1}\xi_1)\ ;
\een
here $\xi_1 , \xi_2$ are any cyclic vector representatives of $\f_1 , \f_2$ on the underlying Hilbert space (that always exist in the standard representation) and $\Delta_{\xi_2,\xi_1}$ is the associated relative modular operator. 
The definition
\eqref{S12} is understood as follows.
If $E(\l)$ the spectral family of $\Delta_{\xi_2,\xi_1}$, then
\ben\label{S12bis}
S(\f_1 |\!| \f_2)  = -\int_0^1 \log\l\, d(\xi_1 , E(\l)\xi_1) -  \int_1^{+\infty} \log\l\, d(\xi_1 , E(\l)\xi_1)  \;
\een
the second integral is always finite as $\log\l < \l^{1/2}$ if $\l> 1$. So \eqref{S12} is well defined by \eqref{S12bis}. 
Moreover $S(\f_1 |\!| \f_2) \geq 0$ by Jensen's inequality as $\log\l$ is a concave function. 
\begin{proposition}\label{pre}
\[
S(\f_1 |\!| \f_2) = i \frac{d}{ds}(\xi_1 , \Delta^{is}_{\xi_2,\xi_1}\xi_1)|_{s=0} =  i \frac{d}{ds}\f_1( (D\f_1 : D\f_2)_s)|_{s=0} \ .
\]
\end{proposition}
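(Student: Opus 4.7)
The plan is to prove the two equalities in turn; both are essentially restatements of Araki's definition and require only spectral calculus together with Connes' identification of the Radon--Nikodym cocycle with the relative modular flow.

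For the first equality $S(\f_1|\!|\f_2) = i\frac{d}{ds}(\xi_1, \Delta^{is}_{\xi_2,\xi_1}\xi_1)|_{s=0}$, I would use the spectral resolution of $\Delta_{\xi_2,\xi_1}$ to write
\[
f(s) := (\xi_1, \Delta^{is}_{\xi_2,\xi_1}\xi_1) = \int_0^\infty \lambda^{is} d\mu(\lambda), \qquad d\mu(\lambda) := d(\xi_1, E(\lambda)\xi_1).
\]
Pointwise $|s^{-1}(\lambda^{is}-1)| \leq |\log\lambda|$ by $|e^{ix}-1|\leq|x|$. On $(1,\infty)$, $|\log\lambda| \leq \lambda^{1/2}$ is $\mu$-integrable by the bound already used just after \eqref{S12bis}; on $(0,1]$, $|\log\lambda|$ is $\mu$-integrable precisely when $S(\f_1|\!|\f_2)<\infty$. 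In the finite case dominated convergence gives $f'(0) = i\int\log\lambda\, d\mu = i(\xi_1, \log\Delta_{\xi_2,\xi_1}\xi_1)$, and multiplication by $i$ yields $-(\xi_1, \log\Delta_{\xi_2,\xi_1}\xi_1) = S(\f_1|\!|\f_2)$. When the entropy is infinite, Fatou applied to the negative spectral part of $\log\Delta$ forces the one-sided difference quotient to $+\infty$, matching both sides.

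For the second equality $\f_1((D\f_1:D\f_2)_s) = (\xi_1, \Delta^{is}_{\xi_2,\xi_1}\xi_1)$, I would invoke the classical Connes--Araki theorem: the cocycle $u_s := (D\f_1:D\f_2)_s$ is a unitary element of $\M$ characterized, in the convention where $\Delta_{\xi_2,\xi_1}$ takes $\xi_1$ as input, by the relation $u_s\xi_1 = \Delta^{is}_{\xi_2,\xi_1}\xi_1$ (obtained from the natural-cone identity $(D\f_1:D\f_2)_s\xi_2 = \Delta^{is}_{\xi_1,\xi_2}\xi_2$ together with the cocycle/adjoint relations $u_s^* = (D\f_2:D\f_1)_s$ and $\Delta^{is}_{\xi_2,\xi_1} = (\Delta^{is}_{\xi_1,\xi_2})^{-1}$ on the relevant vector). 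Then
\[
\f_1(u_s) = (\xi_1, u_s\xi_1) = (\xi_1, \Delta^{is}_{\xi_2,\xi_1}\xi_1),
\]
and combining with Step 1 concludes.

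The main obstacle is not analytic: the differentiation of the spectral integral and the use of Fatou are routine. The delicate point is bookkeeping of the convention, namely matching the ordering of the subscripts in $\Delta_{\xi_2,\xi_1}$ against the ordering in $(D\f_1:D\f_2)_s$, so that the single unitary $u_s\in\M$ acts on $\xi_1$ by exactly the correct power $\Delta^{+is}_{\xi_2,\xi_1}$ and both derivatives at $s=0$ yield the same $-(\xi_1,\log\Delta_{\xi_2,\xi_1}\xi_1)$; once this is fixed, the two identities in the statement become a one-line consequence of spectral calculus.
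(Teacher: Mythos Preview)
Your proposal is correct and follows essentially the same route as the paper: for the first equality the paper cites \cite{OP} in the finite case and uses Fatou's lemma on the $(0,1)$ spectral part in the infinite case, exactly as you do. For the second equality, the paper bypasses your convention-juggling entirely by invoking the one-line identity $(D\f_1:D\f_2)_s = \Delta^{is}_{\xi_2,\xi_1}\Delta_{\xi_1}^{-is}$ and then using $\Delta_{\xi_1}^{-is}\xi_1 = \xi_1$; this gives $u_s\xi_1 = \Delta^{is}_{\xi_2,\xi_1}\xi_1$ immediately and is the cleaner way to settle the bookkeeping you flagged as delicate.
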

\begin{proof}
If $S(\f_1 |\!| \f_2) <\infty$, the proposition is proved in \cite{OP}. So, 
$S(\f_1 |\!| \f_2)  = + \infty$ iff $\int_0^1 -\log\l\, d(\xi_1 , E(\l)\x_1) = +\infty$. As 
\[
\int_0^1 -\log\l \,d(\xi_1 , E(\l)\xi_1) \leq \liminf_{s\to 0^+}   \Re\int_0^1 i\frac{\l^{is}-1 }{s} d(\xi_1 , E(\l)\xi_1)
\]
by Fatou's lemma, we have  $ i \frac{d}{ds}(\xi_1 , \Delta^{is}_{\xi_2,\xi_1}\xi_1)|_{s=0}= +\infty$
if $S(\f_1 |\!| \f_2)= +\infty$. The last equation is immediate by the formula
\[
 (D\f_1 : D\f_2)_s =  \Delta^{is}_{\xi_2,\xi_1}\Delta_{\xi_1}^{-is} \ .
\]
\end{proof}
\subsection{Second quantisation preliminaries}\label{SQ}
Let $\H$ be a complex Hilbert space and  $\Gamma(\H)$  the {\em exponential} of $\H$, i.e.\ the Bosonic Fock space over $\H$ (also denoted by $e^\H$). Thus
\[
\Gamma(\H)\equiv \bigoplus^\infty_{n=0}\H_s^{\otimes^n}\ ,
\]
$\H_0\equiv\mathbb C\xi$ is the one-dimensional Hilbert space a unit vector $\xi$ called the {\em vacuum} vector, and $\H_s^{\otimes^n}$ is the symmetric Hilbert $n$-fold tensor product of $\H$. 

If ${h}\in\H$, we denote by $e^{h}$ the coherent vector of $e^\H$:
\[
e^{h}\equiv \bigoplus^\infty_{n=0} \frac{1}{\sqrt{n!}}(h^{\otimes^n})_s
\]
where the zeroth component of $e^h$ is $\xi$, thus $e^0 = \xi$. 
One may check that
\[
(e^{h} ,e^{k}) = e^{({h},{k})}\ ,
\]
and $\{e^{h},\ {h}\in\H\}$ is a total family  of independent vectors of $\Gamma(\H)$.

If $U$ is a (anti-)unitary on $\H$, the second quantisation unitary $\Gamma(U)$  is the (anti-)unitary on $\Gamma(\H)$ given by $\Gamma(U) |_{\H_s^{\otimes^n}}= U\otimes U \otimes\cdots \otimes U$. 
We have $\Gamma(U) e^{h} = e^{U{h}}$. 

With $h\in \H$, the {\em Weyl unitary} $V(h)$ is determined by
\[
V({h})\frac{e^{k}}{||e^{k}||} = \frac{e^{h +k}}{||e^{h +k}||}
\]
and satisfy the Weyl commutation relations
\begin{equation}\label{Weyl}
V({h} +{k}) = e^{i\Im ({h},{k})}V(h)V({k})\ .
\end{equation}
Note that
\begin{equation}\label{cohvacuum}
V({h})\xi = e^{-\frac12 ({h},{h})}e^{{h}} \ ,
\end{equation}
therefore
\[
(V(k)\xi, V(h)\xi) 
= e^{-\frac12 (||h||^2 + ||k||^2)}(e^k , e^h) 
= e^{-\frac12 (||h||^2 + ||k||^2)}e^{(k,h)} 
\]
and, in particular,
\ben\label{fV}
\f(V({h})) = e^{-\frac12 ||{h}||^2} \ .
\een
where $\f \equiv (\xi,\cdot \xi)$ is the vacuum state. 

Let $H\subset\H$ be a real linear subspace. We put
\[
R(H) \equiv \{V({h}):\ {h}\in H\}'' \ ,
\]
namely $R(H)$ is the von Neumann algebra on $\Gamma(\H)$ given by the weak closure of the linear span of the $V({h})$'s as ${h}$ varies in $H$.

Finally, we recall that, if $H$  a standard subspace, the following hold:
\begin{itemize}
\item If $K$ a dense  subspace of $H$, then $R(K)= R(H)$;
\item $\xi$ is a cyclic and separating vector for $R(H)$;
\item Then the modular unitaries and conjugation associated with $(R(H),\xi)$ are given by
$\Delta^{is}_{R(H)} = \Gamma(\Delta^{is}_{H})$,  $J_{R(H)} = \Gamma(J_{H})$.
\item $R(H') = R(H)'$.
\end{itemize}
\noindent
Here, $\Delta_H$ and $J_H$ are the modular operator and the modular conjugation on $\H$ associated with $H$. 
\subsection{First formula, case $h \in H$} 
Given a standard subspace $H$ and a vector $h\in H$, the entropy $S_h = S^H_h$ of $h$ defined in \eqref{Sdef} is given by
\ben\label{Sh0}
S_h = - (h,\log\Delta_H h) \ .
\een
Indeed
\begin{multline*}
S_h = -\Im (h,P_H i\log\Delta_H h) = -\lim_{\e\to 0^+} \Im  (h_\e,P_H i \log\Delta_H h_\e)\\ = -\lim_{\e\to 0^+}  \Im (h_\e, i \log\Delta_H P_H h_\e) 
=  -\lim_{\e\to 0^+} \Im   (h_\e, i \log\Delta_H h_\e)  = - (h,\log\Delta_H h) \ ,
\end{multline*}
because $P_H$ and $i \log\Delta$ commute on $\mathcal{D}_0$. 
Therefore, in this case, the definition of $S_h$ coincides with the definition in \cite{L19}.

We shall see that the relative entropy between coherent states is given by the entropy of vectors. 

Let $h\in\H$ and $V(h)\xi = e^h/{||e^h||}$ the normalised coherent vector. Suppose $H\subset \H$ is a standard subspace and let $\f_h = (V(h)\xi, \cdot\, V(h)\xi)$ as a state on $R(H)$. Note that
\[
\f_h = \f\cdot{\rm Ad}V(h)^* \big |_{R(H)}\ .
\]
We want to study the relative entropy 
\[
S(\f_h |\!| \f_k)\ , \quad h,k\in H\ ,
\]
between the states $\f_h$ and $\f_k$ of $R(H)$. 
Since
\ben\label{hk}
S(\f_h |\!| \f_k) = S(\f\cdot{\rm Ad}V(-h)V(k) |\!| \f) =  S(\f\cdot{\rm Ad}V(k-h) |\!| \f)
=  S(\f_{k-h} |\!| \f) \ ,
\een
we may restrict our analysis to the case $k=0$, namely $\f_k$ is the vacuum state. 

We have
\ben\label{Shf}
S(\f_h |\!| \f) = - ( \xi , \log\Delta_{V(h)\xi, \xi}\xi)=  - ( V(h)\xi , \log\Delta_{R(H)} V(h)\xi)
\een
with $\Delta_{R(H)}$ the modular operator associated with $(R(H),\xi)$. 
The following proposition holds for all $h\in H$, cf.  \cite{L19}. 
\begin{proposition}\label{PropSh}
Let $h\in H$. If $h \in {\rm Dom}(\log\Delta_H)$, the relative entropy on $R(H)$ between $\f$ and $\f_h$ is given by
\[
S(\f_h |\!| \f) = S_h =  - (h, \log\Delta_H h) \ .
\]
\end{proposition}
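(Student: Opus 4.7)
The plan is to use Proposition \ref{pre}, which expresses the relative entropy as the derivative
\[
S(\f_h\|\!|\f)\;=\;i\,\frac{d}{ds}\bigl(V(h)\xi,\Delta_{R(H)}^{is}V(h)\xi\bigr)\Big|_{s=0},
\]
and then reduce everything to an explicit coherent-vector computation via the second-quantisation identity $\Delta^{is}_{R(H)}=\Gamma(\Delta_H^{is})$. Note that $V(h)\xi$ is cyclic for $R(H)$ since $V(h)\in R(H)$, so this vector does implement $\f_h$ on $R(H)$.

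First, I would observe that $\Gamma(\Delta_H^{is})$ fixes $\xi$ and conjugates Weyl operators as $\Gamma(\Delta_H^{is})V(h)\Gamma(\Delta_H^{is})^*=V(\Delta_H^{is}h)$; therefore
\[
\Delta_{R(H)}^{is}V(h)\xi \;=\; V(\Delta_H^{is}h)\xi.
\]
Using \eqref{cohvacuum} and $\|\Delta_H^{is}h\|=\|h\|$, the coherent-vector inner product formula $(e^h,e^k)=e^{(h,k)}$ gives
\[
\bigl(V(h)\xi, V(\Delta_H^{is}h)\xi\bigr)\;=\;e^{-\|h\|^{2}}\,e^{(h,\,\Delta_H^{is}h)}.
\]

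The next step is to differentiate this at $s=0$. By Stone's theorem, the hypothesis $h\in\mathrm{Dom}(\log\Delta_H)$ is exactly the condition that $s\mapsto \Delta_H^{is}h$ is norm-differentiable at $0$ with derivative $i\log\Delta_H\, h$. Consequently
\[
\frac{d}{ds}(h,\Delta_H^{is}h)\Big|_{s=0}\;=\;i\,(h,\log\Delta_H\, h),
\]
a finite quantity, and the scalar function above is therefore differentiable at $s=0$ with derivative $i(h,\log\Delta_H\,h)$ (the prefactor $e^{-\|h\|^2+(h,h)}=1$). Inserting this into Proposition \ref{pre} yields $S(\f_h\|\!|\f)=-(h,\log\Delta_H\,h)$, and the identification $S_h=-(h,\log\Delta_H h)$ was just established immediately before the statement (cf.\ \eqref{Sh0}), so the proof closes.

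The one delicate point is the passage from the derivative of the \emph{scalar} $s\mapsto (h,\Delta_H^{is}h)$ to the applicability of Proposition \ref{pre}, which is phrased in terms of the relative modular operator $\Delta_{V(h)\xi,\xi}$ rather than $\Delta_{R(H)}$; since $V(h)\in R(H)$, standard Connes cocycle arguments give $\Delta^{is}_{V(h)\xi,\xi}\xi = V(h)^*\Delta^{is}_{R(H)}V(h)\xi$ up to phases, and taking the inner product with $\xi$ recovers exactly $(V(h)\xi,\Delta^{is}_{R(H)}V(h)\xi)$, so the calculation is consistent. I do not expect any genuine obstacle beyond checking this bookkeeping and the real-valuedness of $i\frac{d}{ds}|_{s=0}$, which is automatic since the modulus of the inner product equals $1$ at $s=0$.
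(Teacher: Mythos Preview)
Your proof is correct and follows essentially the same line as the paper's: both apply the derivative formula of Proposition~\ref{pre}, use the second-quantisation identity $\Delta_{R(H)}^{is}=\Gamma(\Delta_H^{is})$, compute the coherent-vector inner product $(V(h)\xi,\Delta_{R(H)}^{is}V(h)\xi)=e^{-\|h\|^2}e^{(h,\Delta_H^{is}h)}$, and differentiate at $s=0$. Your version is in fact slightly more explicit in justifying the differentiation via Stone's theorem and in flagging the passage from $\Delta_{V(h)\xi,\xi}$ to $\Delta_{R(H)}$; the paper records the Connes-cocycle identity $\Delta^{is}_{V(h)\xi,\xi}=V(h)^*\Delta_\xi^{is}V(h)\Delta_\xi^{-is}$ for this purpose, which amounts to the same bookkeeping you describe.
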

\begin{proof}

If $h\in H$, the unitary $V(h)$ belongs to $R(H)$, thus 
\[
\Delta^{is}_{V(h)\xi, \xi} = V(h)^* \Delta_\xi^{is} V(h) \Delta_\xi^{-is}\ .
\]
So we have:
\begin{multline*}
S(\f_h |\!| \f) = i\frac{d}{ds}( V(h)\xi , \Delta^{is}_{R(H)} V(h)\xi)\big |_{s=0}
= i\frac{d}{ds}( V(h)\xi , \Gamma(\Delta^{is}_{H}) V(h)\xi)\big |_{s=0} 
\\  = i e^{- ||h||^2}\frac{d}{ds}e^{(h, {\Delta_H^{is}} h)}\big |_{s=0} 
= i\frac{d}{ds}{(h, {\Delta_H^{is}} h)}\big |_{s=0}
= - {(h, \log\Delta_H h)}\ ,
\end{multline*}
where the first equality follows by equation \eqref{S12} similarly as in the proof of Proposition \ref{pre}. 
\end{proof}
Note that $S(\f_h |\!| \f)$ is real, thus
\ben\label{Scom}
S(\f_h |\!| \f) = - (h, \log\Delta_H h) = i(h, i\log\Delta_H h) = -\Im (h, i\log\Delta_H h)
\een
and $S(\f_h |\!| \f)$ may be computed via the symplectic form $\Im(\cdot ,\cdot)$.

\subsection{Entropy of coherent states, general formula}
Let $k\in\H$ and denote by $\f_k$ and $\f'_k$ the states $(V(k)\xi, \cdot V(k)\xi)$ on $R(H)$ and $R(H')$. 
\begin{lemma}\label{Se}
Let $k\in \H_\e$.  Then $S(\f_k |\!| \f) = S_k$. 
\end{lemma}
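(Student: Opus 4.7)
The plan is to reduce to the case $h\in H$ already handled by Proposition \ref{PropSh}, by decomposing $k=h+h'$ with $h\in H$, $h'\in H'$ and showing that $\f_k$ and $\f_h$ restrict to the same state on $R(H)$.

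First I would decompose $k$. Since $k\in\H_\e$ and $\H_\e\subset\D_0\subset H+H'$ by Theorem \ref{PH}, write $k=h+h'$ with $h=P_Hk\in H$ and $h'=P_{H'}k\in H'$. The invariance \eqref{EeH} together with uniqueness of the decomposition forces $h,h'\in\H_\e$; in particular both lie in ${\rm Dom}(\log\Delta_H)$, where $\log\Delta_H$ is bounded.

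The core step is showing $\f_k|_{R(H)}=\f_h|_{R(H)}$. By \eqref{Weyl}, $V(k)=V(h+h')$ equals $V(h)V(h')$ up to a scalar phase that cancels in $\f_k(A)=(V(k)\xi,AV(k)\xi)$. Since $V(h)\in R(H)$ and $V(h')\in R(H')=R(H)'$, for any $A\in R(H)$ the element $V(h)^*AV(h)$ lies in $R(H)$ and hence commutes with $V(h')$. Sliding $V(h')$ through and using $V(h')^*V(h')=1$ yields
\begin{equation*}
\f_k(A) = (V(h)V(h')\xi,\,AV(h)V(h')\xi) = (V(h)\xi,\,AV(h)\xi) = \f_h(A).
\end{equation*}
Consequently $S(\f_k|\!|\f)=S(\f_h|\!|\f)$.

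Finally, Proposition \ref{PropSh} applied to $h$ gives $S(\f_h|\!|\f)=-(h,\log\Delta_H h)$, while Proposition \ref{Sproperties}(a) applied to the decomposition $k=h+h'$ gives $S_k=-(h,\log\Delta_H h)$. Chaining these two equalities yields $S(\f_k|\!|\f)=S_k$. The only non-routine point is the identity $\f_k|_{R(H)}=\f_h|_{R(H)}$; no limiting arguments are required because $\H_\e$ sits inside the domain of every operator appearing in the argument.
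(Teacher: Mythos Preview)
Your proof is correct and follows essentially the same approach as the paper: decompose $k=h+h'$ with $h\in H\cap\H_\e$, $h'\in H'\cap\H_\e$, reduce $S(\f_k|\!|\f)$ to $S(\f_h|\!|\f)$ using that $V(h')\in R(H')=R(H)'$, and then invoke Proposition~\ref{PropSh}. The only cosmetic differences are that the paper writes the reduction in automorphism form $\f\cdot{\rm Ad}V(h)^*V(h')^*=\f\cdot{\rm Ad}V(h)^*$ on $R(H)$ rather than checking $\f_k(A)=\f_h(A)$ directly, and computes $S_k=S_h$ from the definition instead of citing Proposition~\ref{Sproperties}(a).
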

\begin{proof}
By \eqref{EeH}, we have $k = h + h'$ with $h \in E_\e H$, $h'\in E_\e H'$. So
\[
\-S_k = \Im(h + h', P_Hi\log\Delta (h + h')) = \Im(h + h', i\log\Delta h ) = \Im(h , i\log\Delta h ) = - S_h \ .
\]
On the other hand
\[
S(\f_k |\!| \f)  = S(\f\cdot{\rm Ad}V(k)^* |\!| \f) = S(\f\cdot{\rm Ad}V(h)^* V(h')^* |\!| \f) = S(\f\cdot{\rm Ad}V(h)^* |\!| \f) \ ,
\]
and the lemma follows by Prop. \ref{PropSh}. 
\end{proof}
We now compute the relative entropy between the states $\f$ and $\f_k$ on $R(H)$. 
\begin{lemma}\label{oS}
If $\H = \H_1\oplus \H_2$ with $H = H_1\oplus H_2$, and $k = k_1\oplus k_2$, then
\[
S(\f_k |\!|\f) =  S(\f_{k_1} |\!|\f) + S(\f_{k_2} |\!|\f) \ ,
\] where $\f_{k_i}$ is the coherent state associated with $k_i$ on $\Gamma(\H_i)$. 
\end{lemma}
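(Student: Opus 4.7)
The plan is to leverage the tensor–product functoriality of the Fock construction under direct sums of the one‑particle space, so that every ingredient entering the relative entropy factorises across the two summands $\H_1$ and $\H_2$, and the claim becomes an instance of additivity of the relative entropy for product states.

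First I would invoke the canonical unitary $\Gamma(\H_1\oplus\H_2)\cong \Gamma(\H_1)\otimes\Gamma(\H_2)$ sending $e^{k_1\oplus k_2}$ to $e^{k_1}\otimes e^{k_2}$. From $||e^{l_1\oplus l_2}||^2 = e^{||l_1||^2+||l_2||^2} = ||e^{l_1}||^2\cdot ||e^{l_2}||^2$ and the defining relation of the Weyl unitaries on normalised coherent vectors, it follows under this identification that
\[
V(k_1\oplus k_2) = V(k_1)\otimes V(k_2)\ ,\qquad \xi = \xi_1\otimes \xi_2\ ,\qquad V(k)\xi = V(k_1)\xi_1\otimes V(k_2)\xi_2\ .
\]
Since $V(h_1\oplus 0) = V(h_1)\otimes 1$ and $V(0\oplus h_2) = 1\otimes V(h_2)$, one then deduces $R(H) = R(H_1)\otimes R(H_2)$ as von Neumann algebras on $\Gamma(\H_1)\otimes\Gamma(\H_2)$.

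Next, by the Weyl commutation relations \eqref{Weyl} each $V(k_i)$ normalises $R(H_i)$, so $\f_{k_i} = \f\circ\Ad V(k_i)^*$ is a faithful normal state on $R(H_i)$ with cyclic and separating vector representative $V(k_i)\xi_i$. The standard factorisation of the relative modular operator for a tensor product of faithful normal states then gives
\[
\Delta_{V(k)\xi,\xi} = \Delta_{V(k_1)\xi_1,\xi_1}\otimes \Delta_{V(k_2)\xi_2,\xi_2}\ ,
\]
which can be checked by writing the relative Tomita operator $aV(k)\xi\mapsto a^*\xi$ on elementary tensors $a_1\otimes a_2\in R(H_1)\otimes R(H_2)$ and reading off its polar decomposition on each factor separately.

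Finally, passing to logarithms via the joint functional calculus of the commuting selfadjoint operators $\Delta_{V(k_1)\xi_1,\xi_1}\otimes 1$ and $1\otimes \Delta_{V(k_2)\xi_2,\xi_2}$, one has $\log(A_1\otimes A_2) = \log A_1\otimes 1 + 1\otimes \log A_2$; combined with $(\xi_i,\xi_i) = 1$ this yields
\[
-(\xi,\log\Delta_{V(k)\xi,\xi}\,\xi) = -(\xi_1,\log\Delta_{V(k_1)\xi_1,\xi_1}\xi_1) - (\xi_2,\log\Delta_{V(k_2)\xi_2,\xi_2}\xi_2)\ ,
\]
and Araki's formula \eqref{S12} identifies both sides as the desired additivity. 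The main obstacle is making the splitting of $\log\Delta$ legitimate when the relative modular operators are unbounded and the entropies are possibly infinite; this can be resolved either via the integral representation \eqref{S12bis} applied independently on each spectral factor, or equivalently via Proposition \ref{pre}, using that the Connes cocycle $(D\f_k:D\f)_s = \Delta^{is}_{V(k)\xi,\xi}\Delta^{-is}_\xi$ manifestly factorises as a tensor product and that $s\mapsto \f_{k_i}((D\f_{k_i}:D\f)_s)$ equals $1$ at $s=0$, so the derivative of the product is the sum of the derivatives.
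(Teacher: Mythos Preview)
Your proposal is correct and follows essentially the same route as the paper: factorise $\Gamma(\H)=\Gamma(\H_1)\otimes\Gamma(\H_2)$, observe that $\f=\f\otimes\f$ and $\f_k=\f_{k_1}\otimes\f_{k_2}$ on $R(H)=R(H_1)\otimes R(H_2)$, and conclude by additivity of the relative entropy under tensor product. The paper simply invokes this last additivity as a known fact, whereas you spell it out via the tensor factorisation of the relative modular operator and the Connes cocycle; either way the argument is the same.
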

\begin{proof}
We have $\Gamma(\H) = \Gamma(\H_1) \otimes\Gamma(\H_2)$, $\f = \f\otimes\f$, 
$\f_k = \f_{k_1}\otimes\f_{k_2}$, where we have denoted by $\f$ the vacuum state also  on $\Gamma(\H_i)$. The lemma thus follows by the additivity of the relative entropy under tensor product. 
\end{proof}
\begin{theorem}\label{Su}
Let $k\in \H$. We have
\[
S(\f_k |\!| \f) =   S_k = - \Im(k,  P_H i\log \Delta\, k)  \ .
\]
\end{theorem}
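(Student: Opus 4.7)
The plan is to reduce to Lemma \ref{Se} by a spectral cut-off, exploiting the tensor factorization of the Fock space induced by the spectral decomposition of $\Delta_H$. The main ingredients are Lemma \ref{oS}, Lemma \ref{Se}, Proposition \ref{Sproperties}(d) and (e), together with Araki's lower semicontinuity of the relative entropy.

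Fix $\e>0$ and let $E_\e$ be the spectral projection of $\Delta_H$ introduced before Theorem \ref{PH}. Setting $\K_\e = (1 - E_\e)\H$, one has $\H = \H_\e \oplus \K_\e$ with the induced decomposition $H = H_\e \oplus K_\e$; hence $\Gamma(\H) = \Gamma(\H_\e)\otimes\Gamma(\K_\e)$ and $R(H) = R(H_\e)\otimes R(K_\e)$. Writing $k = k_\e \oplus (k - k_\e)$ with $k_\e = E_\e k$, complex orthogonality of the two summands makes the Weyl phase in $V(k) = V(k_\e)V(k-k_\e)$ trivial, so $V(k) = V(k_\e)\otimes V(k-k_\e)$ under the tensor factorization. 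Lemma \ref{oS} then yields
\[
S(\f_k \,|\!|\, \f)_{R(H)} \;=\; S(\f_{k_\e}\,|\!|\,\f)_{R(H_\e)} \;+\; S(\f_{k-k_\e}\,|\!|\,\f)_{R(K_\e)}.
\]
Since $\mathrm{sp}(\Delta|_{\H_\e})$ is compact in $(0,+\infty)\setminus\{1\}$, Lemma \ref{Se} applied in the inner context $(\H_\e,H_\e)$ gives $S(\f_{k_\e}\,|\!|\,\f)_{R(H_\e)} = S^{H_\e}_{k_\e}$, which by Proposition \ref{Sproperties}(d) equals $S^H_{k_\e}$. Dropping the non-negative second summand and letting $\e \to 0^+$, Proposition \ref{Sproperties}(e) gives $S^H_{k_\e}\uparrow S_k$, whence
\[
S(\f_k \,|\!|\, \f) \;\geq\; S_k.
\]

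For the reverse inequality I would invoke the lower semicontinuity of the relative entropy in the first argument. As $k_\e \to k$ in $\H$, the coherent vectors $V(k_\e)\xi = e^{-||k_\e||^2/2}\,e^{k_\e}$ converge in norm to $V(k)\xi$, so the vector states $\f_{k_\e}$ converge to $\f_k$ in the predual norm of $R(H)$. Araki's lower semicontinuity then yields
\[
S(\f_k \,|\!|\, \f) \;\leq\; \liminf_{\e\to 0^+} S(\f_{k_\e}\,|\!|\,\f) \;=\; \lim_{\e\to 0^+} S^H_{k_\e} \;=\; S_k,
\]
and the two bounds combine to give the theorem; when $S_k = +\infty$ only the lower bound is needed.

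The main delicate step I expect is the reverse inequality: the additive splitting produces a second term $S(\f_{k-k_\e}\,|\!|\,\f)_{R(K_\e)}$ which a priori need not tend to zero as $\e \to 0^+$, since relative entropy is not upper semicontinuous in general. Invoking lower semicontinuity on the sequence $\f_{k_\e}\to\f_k$ bypasses this obstruction cleanly; a direct Weyl-picture estimate of the $K_\e$-term would be an alternative but looks more tedious.
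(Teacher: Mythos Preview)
Your proof is correct and follows essentially the same approach as the paper: the additive splitting via Lemma~\ref{oS} together with non-negativity gives $S(\f_k |\!| \f) \geq S_{k_\e}$, lower semicontinuity of relative entropy along $\f_{k_\e}\to\f_k$ gives the reverse inequality, and Lemma~\ref{Se} with Proposition~\ref{Sproperties}(e) identifies the limit as $S_k$. The only cosmetic difference is that you invoke Lemma~\ref{Se} in the inner context $(\H_\e,H_\e)$ and then pass to $H$ via Proposition~\ref{Sproperties}(d), whereas the paper applies Lemma~\ref{Se} directly on $R(H)$; since $S(\f_{k_\e}|\!|\f)_{R(H)} = S(\f_{k_\e}|\!|\f)_{R(H_\e)}$ by another application of Lemma~\ref{oS} (the $\K_\e$-component of $k_\e$ vanishes), the two are equivalent.
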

\begin{proof}
Let $k\in\H$. We have the direct sum decomposition $\H = \H_\e \oplus \K_\e$, and $J, \Delta, P_H$ decomposes accordingly. Thus Lemma \ref{oS} gives 
\[
S(\f_k |\!| \f)  = S(\f_{k_\e} |\!| \f)  +  S(\f_{(1- E_\e)k} |\!| \f)  \geq S(\f_{k_\e} |\!| \f)\ .
\]
As $k_\e \to k$, by the semicontinuity of the relative entropy we have 
\[
S(\f_k |\!| \f)\leq \liminf_{\e \to 0^+} S(\f_{k_\e} |\!| \f)\ ,
\]
thus 
\[
\lim_{\e \to 0^+}S(\f_{k_\e} |\!| \f) = S(\f_k |\!| \f) \ .
\]
Since $S(\f_{k_\e} |\!| \f) = S_{k_\e}$ by Lemma \ref{Se}, we have
\[
S(\f_k |\!| \f)  = \lim_{\e \to 0^+}S_{k_\e} = S_k 
\]
by Prop. \ref{Sproperties} $(e)$. 
\end{proof}
\section{Entropy for free QFT's on Rindler spacetime}\label{Rindler}
We now apply the previous analysis to compute the entropy of localised automorphisms for a free scalar Quantum Field Theory on the Rindler spacetime. As a consequence, we shall have the QNEC inequality for coherent states. 

The Rindler spacetime may be identified with the wedge region $W$ of the Minkowski spacetime, as we shall do. 

We denote by $\A$ the net of von Neumann algebras associated with the free, neutral scalar field. Thus
\[
\A(W) = R(H(W)) \ .
\]
$\A(W)$ is the global von Neumann algebra of the the free, neutral scalar field on the Rindler spacetime. 
Given a wave $\Phi\in\T$ we shall consider the automorphism 
\[
\beta_\Phi = {\rm Ad}V([\Phi])^*|_{\A(W)} \ .
\]
The entropy of the automorphism $\beta_k$ with respect to the vacuum is defined as
\[
S(\b_\Phi) = S(\f_\Phi|_{\A(W)} |\!| \f|_{\A(W)}) \ .
\]
More generally, we shall consider the entropy of $\beta_\Phi$ relative to the subwedge $W_\l = W + (\l,\l, 0,\dots,0)$. 

Note that a subregion $W_\l$ can be intrinsically defined  in Rindler spacetime as a the causally complete, space-translation invariant, region spanned by a positive half timelike geodesic. Namely a subregion of $W$ is one of the $W_\l$'s iff it is space-translation invariant, causally complete and mapped into itself by boosts with positive parameter. Once we fix one of these proper  subregions, and call it $W_1$, the null translated region $W_\l$ is given by $L_W(s)W_1$ with $e^s - 1 = \l$. This defines the null translation parameter $\l$ up to rescaling.

\begin{theorem}\label{maine3}
Let $\Phi\in \T$ be a real Klein-Gordon wave. The entropy of $\beta_\Phi$ with respect to the vacuum is given by
\ben\label{mainth2}
S(\f_\Phi|_{\A(W_\l)} |\!| \f|_{\A(W_\l)})   = S_{\Phi}(\l) = 2\pi\int_{x^0 = \l,  x^1\geq \l}(x^1 -\l)T_{00}(x)dx \ ,
\een
where $S_\Phi(\l)$ is the entropy of the wave $\Phi$ with respect to $H(W_\l)$ (cf. Theorem \ref{maine2}).

In particular, the second derivative of $\frac{d^2}{d\l^2}S_\Phi(\l)$ is given by Theorem \ref{Ew}, so the QNEC inequality 
\[
\frac{d^2}{d\l^2}S_\Phi(\l)\geq 0
\] 
holds true for the coherent state associated with $\Phi$. 
\end{theorem}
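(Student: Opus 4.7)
The plan is to recognize that Theorem \ref{maine3} is essentially a synthesis of the results already established: Theorem \ref{Su} identifies the relative entropy of a coherent state on $R(H)$ with the entropy of the associated one-particle vector, and Theorem \ref{maine2} gives the explicit integral expression for the entropy of a classical wave. So the task reduces to assembling these identifications in the correct order.

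First I would invoke the identification $\A(W_\l) = R(H(W_\l))$ from the construction of the free neutral scalar field reviewed in Section \ref{SQ}, where $H(W_\l)$ is the standard subspace of one-particle vectors localised in $W_\l$ (Proposition \ref{supp} characterises it via Cauchy data on the half-space $x_1 \geq \l$). By definition of $\beta_\Phi$, the state $\f_\Phi = \f\cdot\beta_\Phi^{-1}$ on $\A(W_\l)$ is the vector state induced by the coherent vector $V([\Phi])\xi$, so the relative entropy on the left-hand side of \eqref{mainth2} is precisely $S(\f_{[\Phi]} |\!| \f)$ on $R(H(W_\l))$ in the notation of the previous section.

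Next I would apply Theorem \ref{Su} with $H = H(W_\l)$ and $k = [\Phi]$, which directly yields
\[
S(\f_\Phi|_{\A(W_\l)} |\!| \f|_{\A(W_\l)}) \;=\; S^{H(W_\l)}_{[\Phi]} \;=\; S_\Phi(\l),
\]
where the last equality is just the definition of $S_\Phi(\l)$. The explicit integral formula then follows verbatim from Theorem \ref{maine2}. The QNEC inequality is immediate from Theorem \ref{Ew}, which expresses $\frac{d^2}{d\l^2}S_\Phi(\l)$ as the integral of $\langle v, Tv\rangle$ over $\{x_0 = \l,\, x_1 = \l\}$ for the light-like vector $v = (1,1,0,\dots,0)$; this is manifestly non-negative by the classical null energy condition \eqref{Tv}.

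There is no genuine obstacle here: the analytic content (computing the cutting projection, relating the modular generator of $H(W_\l)$ to the boost generator via Bisognano--Wichmann, and performing the integration by parts that produces $T_{00}$) has been discharged in Sections \ref{evector}--\ref{sect:cs}. The only thing to verify carefully is the compatibility of restrictions: that the relative entropy computed abstractly for $R(H(W_\l))$ really coincides with the relative entropy for $\A(W_\l)$, which is automatic since the two von Neumann algebras agree and the implementing vectors $\xi$ and $V([\Phi])\xi$ are the same in both frameworks.
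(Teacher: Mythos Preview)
Your proposal is correct and matches the paper's own argument, which reads simply ``Immediate by Theorems \ref{Ew} and \ref{Su}.'' You have just spelled out in more detail the identifications (in particular $\A(W_\l)=R(H(W_\l))$ and $\f_\Phi=\f_{[\Phi]}$) that the paper leaves implicit, and you correctly note that the explicit integral formula comes from Theorem \ref{maine2}, which the statement itself already cites.
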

\begin{proof}
Immediate by Theorems \ref{Ew} and \ref{Su}.
\end{proof}
Note that $\int_{x^0 = 0,\, x^1 \geq 0}\, x^1 T_{00}\,  dx$ may be interpreted as the energy in the Rindler spacetime $W$ in the state given by $\Phi$; the energy is here the one given by the Rindler Hamiltonian, the generator of the time-like Rindler geodesic flow (Minkowskian boosts).

In the following corollary, we denote a vector in $\H$ by $[\Phi]$ and set $S_\Phi = S_{[\Phi]}$, $\f_\Phi = \f_{[\Phi]}$ consistently with the above notations. 
\begin{corollary}
Let $[\Phi]$ be any vector in $\H$. The map 
\[
\l\in [0,\infty) \to S_\Phi(\l) = S(\f_\Phi|_{\A(W_\l)} |\!| \f|_{\A(W_\l)})\in [0,\infty]
\]
is convex. 
\end{corollary}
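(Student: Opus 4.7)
The proof plan is to reduce the general case to smooth waves by approximation. For $\Phi \in \T$ the convexity is immediate from the QNEC computation of Theorem \ref{Ew}, and one extends by density of $\T$ in $\H$ together with lower semi-continuity of relative entropy.

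The smooth case proceeds as follows. For $\Phi \in \T$, Theorem \ref{maine3} identifies $S_{\Phi}(\lambda)$ with $2\pi\int_{x_0=\lambda,\,x_1\geq\lambda}(x_1-\lambda)T_{00}\,dx$, and Theorem \ref{Ew} shows that $S_\Phi$ is $C^2$ with $S''_\Phi(\lambda)\geq 0$. Hence $S_\Phi$ is convex on $[0,\infty)$.

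For the general case, fix $\lambda_1<\lambda_2$ and $t\in[0,1]$, and set $\lambda=t\lambda_1+(1-t)\lambda_2$. The target inequality $S_\Phi(\lambda)\leq tS_\Phi(\lambda_1)+(1-t)S_\Phi(\lambda_2)$ is trivial if either endpoint is $+\infty$, so assume both are finite. By density of $\{[\Phi]:\Phi\in\T\}$ in $\H$, I would choose $\Phi_n\in\T$ with $[\Phi_n]\to[\Phi]$ in $\H$. Using Proposition \ref{Sproperties}(g), which gives continuity of the entropy in the graph norm of $\sqrt{|\log\Delta|}\,E_-$, together with the monotone approximation of Proposition \ref{Sproperties}(e), I would refine this sequence by a diagonal procedure so that $S_{\Phi_n}(\lambda_i)\to S_\Phi(\lambda_i)$ for $i=1,2$ simultaneously. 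For each such $n$, convexity of $S_{\Phi_n}$ yields
\begin{equation*}
S_{\Phi_n}(\lambda)\leq tS_{\Phi_n}(\lambda_1)+(1-t)S_{\Phi_n}(\lambda_2).
\end{equation*}
Since $V([\Phi_n])\xi\to V([\Phi])\xi$ strongly on Fock space, lower semi-continuity of Araki's relative entropy in its first argument gives $S_\Phi(\lambda)\leq\liminf_n S_{\Phi_n}(\lambda)$. Combining,
\begin{equation*}
S_\Phi(\lambda)\leq\liminf_n S_{\Phi_n}(\lambda)\leq\lim_n\bigl[tS_{\Phi_n}(\lambda_1)+(1-t)S_{\Phi_n}(\lambda_2)\bigr]=tS_\Phi(\lambda_1)+(1-t)S_\Phi(\lambda_2).
\end{equation*}

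The main obstacle is producing the simultaneous approximation at the two endpoints: a generic norm-approximation $[\Phi_n]\to[\Phi]$ only secures lower semi-continuity rather than equality of entropies, and the canonical monotone approximation via spectral cutoffs of $\Delta_{H(W_\lambda)}$ is $\lambda$-dependent. Verifying that $\T$ is dense in the intersection of the two graph-norm domains associated with $\sqrt{|\log\Delta_{H(W_{\lambda_1})}|}\,E_-$ and $\sqrt{|\log\Delta_{H(W_{\lambda_2})}|}\,E_-$ is the technical point to settle; compactly-supported smooth Cauchy data yield waves lying in every such domain, so the approximation should follow from standard regularization arguments.
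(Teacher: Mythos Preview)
Your strategy matches the paper's: prove convexity for $\Phi\in\T$ via the QNEC computation, then pass to arbitrary vectors by approximation. The difference lies in how the limit is organised and, crucially, in how the approximation step is justified.

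The paper does not split into an interior point (handled by lower semi-continuity) and two endpoints (handled by convergence of entropies). Instead it argues directly that a pointwise limit of convex functions is convex, so it suffices to produce a sequence $\Phi_n\in\T$ with $S_{\Phi_n}(\lambda)\to S_\Phi(\lambda)$ for every $\lambda$ in the relevant interval. By Proposition~\ref{Sproperties}(g) this follows once $[\Phi_n]\to[\Phi]$ in the graph norm of $\sqrt{|\log\Delta|}\,E_-$. The paper then invokes an external density result (Proposition~1.19 of \cite{GL01}, valid also for real-linear subspaces) to conclude that $[\T]$ is dense in $\H$ for this graph norm. That citation is exactly the ingredient your ``standard regularization arguments'' gesture toward but do not supply.

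So the gap you yourself flag is real and is the heart of the matter: mere norm density of $[\T]$ in $\H$ together with lower semi-continuity does not give convergence of entropies at the endpoints, and your proposed diagonal procedure is not well-defined as stated (the spectral cutoffs $E_\varepsilon$ of $\Delta_{H(W_\lambda)}$ do not land in $[\T]$, so there is nothing obvious to diagonalise against). What is needed is precisely graph-norm density of $[\T]$, which the paper imports from \cite{GL01}. Once you have that, your two-endpoint-plus-semicontinuity argument works fine, though the paper's pointwise-limit formulation is tidier.
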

\begin{proof}
We may assume that there is a $\l$ with $S_\Phi(\l) < \infty$, otherwise the statement is obvious. Let $\l_0$ be the infimum of all $\l$ with 
$S_\Phi(\l) < \infty$; then $S_\Phi(\l) < \infty$ if $\l\in (\l_0,\infty)$ and we have to show that $S_\Phi(\l)$ is convex in $(\l_0,\infty)$. As the pointwise limit of  convex functions is convex, and $S_\Phi(\l)$ is convex if $\Phi\in\T$ by Theorem \ref{maine3}, 
the corollary will  follow  by Prop. \ref{Sproperties} $(g)$ once we show that there exists a sequence of waves $\Phi_n\in \T$ such that $[\Phi_n] \to [\Phi]$ in the graph norm of $\sqrt{|\log\Delta|}\, E_-$.  
This follows because the real linear space $[\T] = \{[\Phi] : \Phi\in\T\}$ is dense in $\H$ in the graph norm of $\sqrt{|\log\Delta|}\, E_-$ as a consequence of Proposition 1.19 in \cite{GL01}. (Note that $[\T]$ is only real linear, but the proof of  
\cite[Proposition 1.19]{GL01} is valid also with a real linear subspace $\D$ there.)
\end{proof}
\section{Outlook}
As the reader might have noticed, our entropy formulas, that ultimately rely on the entropy formula for a vector in the Hilbert space relative to a real linear subspace, depends only on the symplectic form, the imaginary part of the Hilbert space scalar product. They are therefore meaningful in the more general contexts concerning QFT on a curved spacetime background. We aim to study these situations in a forthcoming work, in particular concerning Schwarzschild spacetime, cf. \cite{HI}. 

One important question left open in this paper is the computation of the relative entropy, in the free scalar QFT context, between the vacuum state and any other normal state on $\A(W)$, see however \cite{LLR}. 
\medskip

\noindent
{\bf Acknowledgements.} 
We acknowledge the MIUR Excellence Department Project awarded to the Department of Mathematics, University of Rome Tor Vergata, CUP E83C18000100006.


\begin{thebibliography}{99}\itemsep-2pt

\bibitem{Ar} {\sc H. Araki},
{\it Relative entropy of states of von Neumann algebras},
Publ. RIMS Kyoto Univ. 11 (1976), 809--833. 

\bibitem{BW}  {\sc J. Bisognano, E. Wichmann}, 
{\it On the duality condition for a Hermitean scalar field}, 
J. Math. Phys. 16 (1975), 985.

\bibitem{BGL02} {\sc  R. Brunetti, D.  Guido, R  Longo}, 
{\it Modular localization and Wigner particles},
Rev. Math. Phys. 14 (7 \& 8), (2002) 759--786.

\bibitem{CGP} {\sc H.  Casini, S.  Grillo, D.  Pontello},
{\it Relative entropy for coherent states from Araki formula},
Phys. Rev. D 99 (2019), no.12, 125020. 

\bibitem{CF}  {\sc F. Ceyhan, T. Faulkner},
{\it Recovering the QNEC from the ANEC},
Comm. Math. Phys. 377 (2020), no.2, 999--1045.

\bibitem{C73} {\sc A. Connes},  
{\it Une classification des facteurs de type III}, 
Ann. Sci. Ec. Norm. Sup. 6,  (1973) 133--252.

\bibitem{FG} {\sc F. Figliolini, D. Guido},  
{\it On the type of second quantization factors},
Comm. Math. Phys. 218 (2001), 513--536.

\bibitem{GL01} {\sc D. Guido, R. Longo},  
{\it Natural energy bounds in quantum thermodynamics},
J. Operator Th. 31 (1994), no. 2, 229--252.

\bibitem{H} {\sc R. Haag}, 
``Local Quantum Physics -- Fields, Particles, Algebras'',   
2nd edn., Springer, New York (1996).

\bibitem{HL} {\sc P.D. Hislop, R. Longo}, 
{\it Modular structure of the local algebras associated with the free massless scalar field theory}, 
Comm. Math. Phys. 84, (1982), 71--85. 

\bibitem{HI} {\sc S. Hollands, A. Ishibashi},
{\it News vs information},
Class. Quantum Grav. 36 (2019), no. 19, 195001. 

\bibitem{J} {\sc R. Jost},
``The general theory of quantized fields'',
Mark Kac ed., Lectures in Applied Mathematics, Vol. IV AMS, Providence, R.I. 1965

\bibitem{LLR} {\sc N. Lashkari, H. Liu, S. Rajagopal},
{\it Modular flow of excited states},
J. High Energy Phys. no. 9 (2021), Paper No. 166. 

 \bibitem{L}{\sc R. Longo}, 
``Lectures on Conformal Nets", preliminary lecture notes that are available at
\url{http://www.mat.uniroma2.it/~longo/lecture-notes.html} 

\bibitem{LN}{\sc R. Longo},  
{\it Real Hilbert subspaces, modular theory, $SL(2,\mathbb R)$ and CFT}, 
in: ``Von Neumann algebras in Sibiu'', 33-91, Theta 2008.

\bibitem{L18} {\sc R. Longo}, 
{\it Entropy distribution of localised states}, 
Comm. Math. Phys. 373 (2020), no.2, 473--505.

\bibitem{L97}{\sc R. Longo}, 
\textit{An analogue of the Kac-Wakimoto formula and black hole conditional entropy}, 
Comm. Math. Phys. 186 (1997), 451--479.

\bibitem{L18L} {\sc R. Longo}, 
{\it On Landauer's principle and bound for infinite systems}, 
Comm. Math. Phys. 363 (2018), 531--560.

\bibitem{L19} {\sc R. Longo}, 
{\it Entropy of coherent excitations}, 
Lett. Math. Phys.109 (2019), no.12, 2587--2600.

\bibitem{LXbek} {\sc R. Longo, F. Xu}, 
{\it Comment on the Bekenstein bound}, 
J.  Geom.  Phys. 130 (2018), 113--120.

\bibitem{LXrel} {\sc R. Longo, F. Xu}, 
{\it Relative entropy in CFT}, 
Adv. in Math. 337 (2018) 139--170.

\bibitem{OP}{\sc M.  Ohya, D. Petz},
``Quantum entropy and its use", Texts and Monographs in Physics. Springer-Verlag, Berlin, 1993

\bibitem{X18}{\sc F. Xu},
{\it On relative entropy and global index},
Trans. Amer. Math. Soc. 373 (2020), no.5, 3515--3539.

\bibitem{T} {\sc M. Takesaki}, 
``Theory of operator algebras'', I \& II, 
Springer-Verlag, New York-Heidelberg, 2002 \& 2003.

\bibitem{Wa} {\sc R. M. Wald},
 {\it Quantum Field Theory in Curved Spacetime and Black Hole Thermodynamics}, 
 Univ. Chicago press 1994. 

\bibitem{Wit} {\sc E. Witten}, 
{\it APS Medal for Exceptional Achievement in Research: Invited article on entanglement properties of quantum field theory}, 
Rev. Mod. Phys. 90 (2018) 045003. 

\end{thebibliography}
\end{document}